%% file: main.tex
\documentclass[letterpaper,twocolumn,10pt]{article}
\usepackage{usenix}

\usepackage{tikz}
\usepackage{amsmath}
\usepackage{booktabs} %
\usepackage[camera]{dtrt}

\usepackage[dvipsnames]{xcolor}
\usepackage{makecell}
\usepackage{mathtools}
\usepackage{url}
\usepackage{bbm}

\usepackage[vlined,ruled,linesnumbered]{algorithm2e} 
\SetKwRepeat{Do}{do}{while}

\usepackage{multirow}
\usepackage{verbatim} %
\usepackage{amsthm}   %
\usepackage{enumitem} %
\usepackage{thmtools} %

\newtheorem{theorem}{Theorem}
\newtheorem{corollary}{Corollary}

\newtheorem{proposition}{Proposition}
\newtheorem{definition}{Definition}
\newtheorem{claim}{Claim}

\newtheorem{remark}{Remark}
\newtheorem{property}{Property}

\newtheorem{example}{Example}

\usepackage{tcolorbox}
\usepackage{threeparttable}
\usepackage{graphicx}
\usepackage{subfigure}
\usepackage{tikz}
\usetikzlibrary{automata, positioning, arrows}
\tikzset{
box/.style ={
rectangle, %
rounded corners =1pt, %
minimum width =50pt, %
minimum height =20pt, %
inner sep=5pt, %
draw=black, %
align = center
}}

\newcommand{\set}[1]{\left\{ #1 \right\}}

\newcommand{\numCG}{\ensuremath{\#CG}}
\newcommand{\sizeCG}{\ensuremath{|CG|}}

\def\Snospace~{\S{}}

\ifdefined\corollaryautorefname
\renewcommand{\corollaryautorefname}{Cor.}
\else
\newcommand{\corollaryautorefname}{Cor.}
\fi
\ifdefined\observationautorefname
\renewcommand{\observationautorefname}{Obs.}
\else
\newcommand{\observationautorefname}{Obs.}
\fi

\newcommand{\betterratio}{53.2\%\xspace}
\newcommand{\name}{\textsf{Boost+}\xspace}
\newcommand{\mechanism}{\ensuremath{\mathcal{M}_{\name}}\xspace}

\makeatletter
\@ifpackagewith{dtrt}{draft}{
    \setlength{\marginparwidth}{1.5cm}
    \usepackage[colorinlistoftodos,prependcaption,textsize=scriptsize,textwidth=\marginparwidth]{todonotes}
    \newcommand{\fanz}[1]{{\todo[fancyline,color=red!40,caption={FZ}]{#1}}}
}{
\newcommand{\fanz}[1]{}
}
\makeatother

\begin{document}
\title{\Large \bf Boost+: Equitable, Incentive-Compatible Block Building}

\author{
{\rm Mengqian Zhang$^*$}\\
Yale University, IC3
\and
{\rm Sen Yang$^*$} \\
Yale University, IC3
\and
{\rm Kartik Nayak} \\
Duke University
\and
{\rm Fan Zhang} \\
Yale University, IC3
}

\maketitle

\def\thefootnote{*}\footnotetext{These authors contributed equally to this work.}\def\thefootnote{\arabic{footnote}}

\begin{abstract}

Block space on the blockchain is scarce and must be allocated efficiently through block building. 
However, Ethereum's current block-building ecosystem, MEV-Boost, has become highly centralized due to \emph{integration}, which distorts competition, reduces blockspace efficiency, and obscures MEV flow transparency. 
To guarantee equitability
and economic efficiency in block building, we propose \name, a system that decouples the process into collecting and ordering transactions, and ensures equal access to all collected transactions.

The core of \name is the mechanism \mechanism, built around a default algorithm.
\mechanism aligns incentives for both searchers (intermediaries that generate or route transactions) and builders: 
Truthful bidding is a dominant strategy for all builders. For searchers, truthful reporting is dominant whenever the default algorithm dominates competing builders, and it remains dominant for all conflict-free transactions, even when builders may win. 
We further show that even if a searcher can technically integrate with a builder, non-integration combined with truthful bidding still dominates any deviation for conflict-free transactions. 
We also implement a concrete default algorithm informed by empirical analysis of real-world transactions and evaluate its efficacy using historical transaction data.
\end{abstract}

\input{sections/1-intro}
\input{sections/2-background}

\input{sections/3-model}
\input{sections/4-mechanism}

\input{sections/5-implementation}

\input{sections/6-discussion}

\input{sections/7-related-works}

\input{sections/8-conclusions}

\section*{Acknowledgements}
We thank the Flashbots Data team for providing valuable support for our empirical analysis.
We also thank Bruno Mazorra, Burak Öz, and Christoph Schlegel for helpful discussions.

\bibliographystyle{plain}
\bibliography{ref}
\appendix
\input{sections/appendix.tex}

\end{document}

%% file: sections/1-intro.tex
\section{Introduction}

Decentralized blockchains have the potential to enable novel applications with unprecedented security properties, such as stablecoins and Decentralized Finance~\cite{bbc2017cryptokitties,lazzarin2020five,chainalysis2021NFT,heimbach2025early}. 
However, security and decentralization requirements imply that the computational capacity of a decentralized blockchain is inherently {\em scarce}. 
To utilize scarce block space effectively, block proposers select and order transactions to form blocks whose execution yields the most profit. 

In smart contract blockchains such as Ethereum, the presence of maximal extractable value (MEV) makes block building computationally complex~\cite{daian2020flash}, which is believed to threaten decentralization by creating a high entry barrier that only well-resourced proposers can overcome.
In 2022, Ethereum introduced MEV-Boost~\cite{flashbots2025mevboost}, a system that aims to lower the entry barrier for proposers, thereby promoting diverse participation and enhancing decentralization.

However, this system also introduces undesired dynamics, driven by the formation of {\em integrations}~\cite{wahrstatter2024blockchain,eip7805,yang2024decentralization}.
Integration refers to the practice in which searchers\footnote{Searchers are intermediaries that generate or route transactions. We will elaborate on the definition in~\autoref{sec:background}.} send their profitable transactions exclusively to a single builder rather than broadcasting them broadly. This exclusive access gives the integrating builder an unfair advantage in block building and enables them to win the MEV-Boost auction with higher profit, thereby increasing their overall revenue. 

Integration contributes to the centralization of the builder market~\cite{yang2022sok,wahrstatter2023time,yang2024decentralization,oz2024wins}, which weakens security~\cite{wadhwa2025aucil} and competition (e.g., as of August 2025, two builders produce the majority of blocks~\cite{relayscan}).
Beyond centralization, integration also leads to inefficient use of scarce block space. Different integrated builders often receive valuable transactions from different searchers that do not conflict with each other. However, since only one builder’s block is chosen, many of these compatible transactions are excluded even though they could have been included together in a single block.
Moreover, integration is typically paired with non-transparent off-chain payments, which obscure how money and MEV actually flow among parties. Such opacity creates information asymmetries, makes it harder for the ecosystem to form a clear picture of MEV extraction and revenue sharing, and complicates any future regulatory or compliance considerations.
Finally, the advantage of integration indicates that currently, the ecosystem tends to reward private partnerships over technical improvements (e.g., designing better algorithms for constructing more valuable blocks).

These issues motivate a natural question: {\em Can we eliminate the incentive towards integration through mechanism and system design?}

\parhead{Our approach}
Achieving incentive compatibility for searchers and builders simultaneously proves challenging in the existing paradigm set by MEV-Boost, because the system is unable to enforce any rules regarding the interaction between searchers and builders. As illustrated in~\autoref{fig:overall}, the current design does not consider searchers as participants.

For the same reason, proposed changes within the MEV-Boost framework, such as BuilderNet~\cite{buildernet2024introducing}, fall short of providing strong incentive compatibility guarantees. %

Our solution, \name, departs from this paradigm and explicitly governs the interaction between searchers and builders.
In \name, builders submit block-building algorithms and searchers provide their transactions, also known as their order flows, to match the established terminology. \name then executes the submitted algorithms against submitted transactions to construct the best possible block and determines the value allocation among parties, following a mechanism with formal incentive compatibility guarantees.
The entire workflow runs in a TEE, as with BuilderNet, to protect the confidentiality of order flows and the integrity of the entire process.

\begin{figure}
    \centering
    \includegraphics[width=\columnwidth]{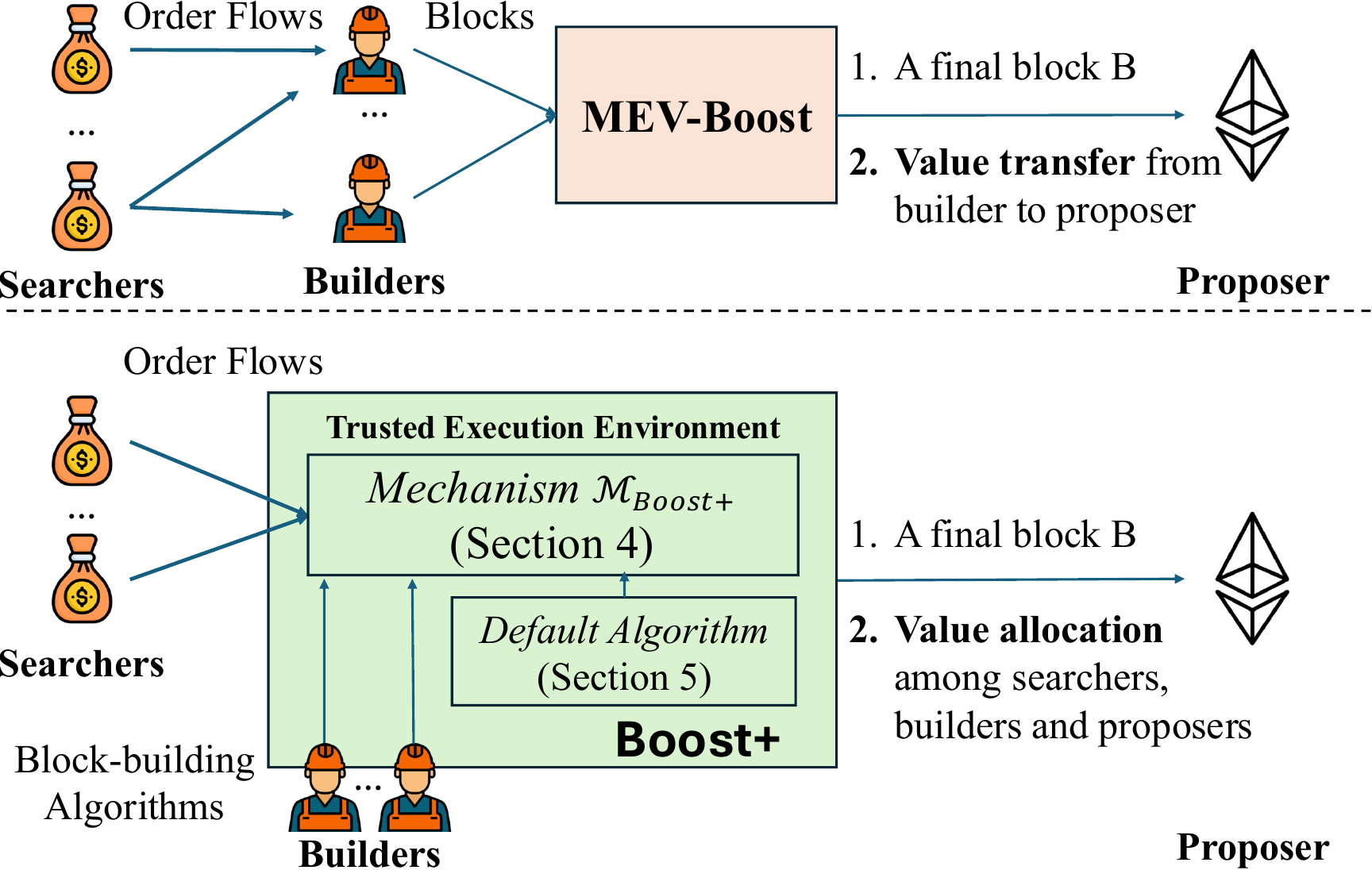}
    \caption{High-level comparison between \name and MEV-Boost.}
    \label{fig:overall}
\end{figure}

\subsection{Overview of our solution}
\parhead{The mechanism}
In the framework depicted above, the key component of our design is a mechanism, denoted \mechanism, that determines the final block and associated money transfers among parties. 
The main challenge is to ensure truthfulness, meaning that for both searchers and builders, reporting their true valuations is a dominant strategy. \done%
To this end, we propose a mechanism based on the Vickrey–Clarke–Groves (VCG) framework~\cite{vickrey1961counterspeculation,clarke1971multipart,groves1973incentives}, the heart of which is a computationally efficient algorithm satisfying two key properties: (i) the set of candidate blocks it explores is fixed independently of bids, and (ii) within this fixed set, it can efficiently identify both the block maximizing the total bid and, for each transaction, the counterfactual block that would have been chosen had that particular transaction's bid been ignored from the maximization process. We refer to this algorithm as the \emph{default algorithm}, emphasizing that it provides a baseline option for constructing the final block even in the absence of external builder participation.

To incorporate builders' algorithms, the mechanism operates in two stages: In the first stage, it executes the default algorithm to compute a default block and the refund to each order flow. 
In the second stage, it runs builder-submitted algorithms (if any) to determine the winning block and the value allocated to builders and the proposer. Specifically, the mechanism generates all candidate blocks, including those generated by builder-submitted algorithms and the default block, and selects the one with the highest total payment as the final block. If a builder-submitted block wins, the additional surplus it offers over the second-highest block is refunded to the winning builder. All the remaining value---calculated as the winning payment minus the amounts allocated to searchers and builders---goes to the proposer. 

In particular, before these steps, the mechanism separates out transactions that do not interact with the read/write storage touched by others (which we call \emph{conflict-free}), and always appends them to the winning block. The refund for such conflict-free order flows is simply their bid in the final block, resulting in zero net payment. 

Under this design, builders' dominant strategy is always to bid their true block profit. Moreover, the VCG-based mechanism ensures that when no external builder participates or when the default algorithm outperforms all builders, truthful bidding is also dominant for searchers. For searchers with conflict-free order flows, bidding truthfully remains dominant even when builders may win. Finally, when a searcher has the technical ability to integrate with a builder under \name, the strategy of \emph{not} integrating and bidding truthfully still dominates any (joint) deviation for conflict-free order flows.

\parhead{The default algorithm}
Our theoretical analysis of the mechanisms assumes a set of properties that the default algorithm satisfies. 
We propose a concrete algorithm as follows.

The idea is to group transactions according to their read/write conflicts, so that transactions in different groups do not interfere with each other. The remaining task is to identify the best sub-block within each group.
To assess whether the conflicts within each group are easy to resolve, we conduct an empirical study on Ethereum transaction conflicts. Specifically, we group transactions based on their conflicts over Ethereum storage and find that, in most blocks, conflict groups are small, with only a few groups containing multiple transactions. This finding is consistent with the observation that MEV opportunities are limited in most blocks.

Our empirical study shows that the idea is practical. Most conflict groups are small enough to allow full enumeration.
In special cases, such as sandwich attacks, only one bundle can be chosen, so a single pass suffices, which can reduce complexity.
For the remaining cases, we reduce the group size to a manageable subset, keeping enumeration feasible.
Our evaluation shows that this default algorithm can construct the optimal blocks in \betterratio of cases. %

\subsection*{Contributions}Our contributions can be summarized as follows:
\begin{itemize}[leftmargin=*]
    \item \name, a block-building architecture that explicitly considers all three practically important actors (searchers, builders, and proposers) and separates the process of transaction collection and algorithm execution.
    \item \mechanism, a mechanism that achieves some incentive compatibility guarantees for players in \name. It also rewards builders whenever their algorithms win, thereby incentivizing innovation on advanced block-building algorithms.
    \item We present a concrete default block-building algorithm based on empirical insights from real-world transactions. Our evaluation shows that the algorithm is optimal in \betterratio of the cases, demonstrating its effectiveness in practice.

\end{itemize}

%% file: sections/2-background.tex
\section{Background}
\label{sec:background}

\subsection{Ethereum and Block Building}
\label{sec:ethereum-block-building}

This section provides essential background on the block building process in Ethereum, also serving as a glossary.

\parhead{Ethereum consensus}
Ethereum transitioned to Proof-of-Stake (PoS) in September 2022~\cite{ethereum2022merge}.
Under Ethereum PoS, a participant can become a \textit{validator} by staking at least 32 ETH, which ties the validator's economic stake to the security of the network~\cite{buterin2020combining}.
Time on Ethereum is divided into 12-second \textit{slots}. In each slot, one validator is pseudorandomly selected as the \textit{proposer} to construct and broadcast a block, while a committee of validators, also selected pseudorandomly, serve as \textit{attesters}.
Attesters issue votes (attestations) on the block, and the protocol considers the block valid only after it has received the required quorum of attestations.

\parhead{Maximal Extractable Value (MEV)}
Maximal Extractable Value (MEV) refers to the value that block producers can obtain by including, excluding, or reordering transactions within a block~\cite{daian2020flash}.
MEV extraction often requires significant capital, computational resources, and advanced algorithms, which are typically available only to large players, thereby excluding smaller players and reinforcing centralization risks.

\parhead{Proposer-builder separation and MEV-Boost}
An important motivation behind proposer–builder separation (PBS) is to mitigate the centralization effects of MEV~\cite{ethereum2025pbs}. The idea of PBS is that proposers outsource their block-building tasks to a new entity specialized in block building, called a \textit{builder}.
By auctioning off block-building rights to the external builder market, proposers gain more decentralized access to MEV revenue, since both large and small validators can benefit from the competition among builders.
The current implementation of PBS is MEV-Boost~\cite{flashbots2025mevboost}.

\begin{figure}[!thbp]
    \centering
    \includegraphics[width=\linewidth]{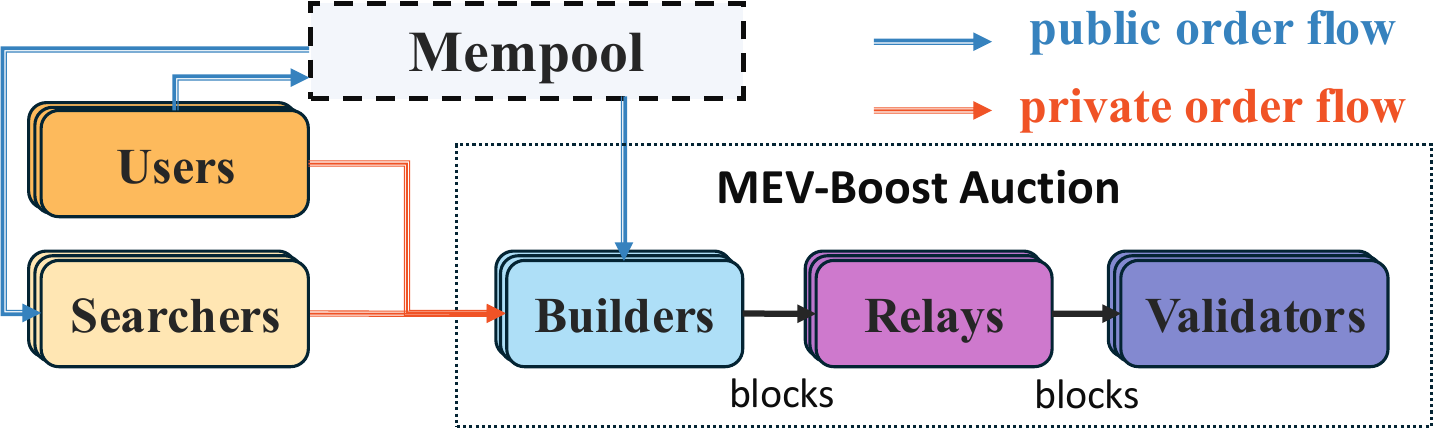}
    \caption{Illustration of block building and MEV-Boost.}
    \label{fig:block-building}
\end{figure}

\parhead{Searcher and bundles}
A \textit{searcher} is an independent participant in the blockchain ecosystem who monitors transactions, either from the public mempool or various private channels they can access, to identify profitable opportunities. These are commonly referred to as {\em MEV opportunities} and include arbitrage~\cite{wang2022cyclic,mclaughlin2023large,heimbach2024nonatomic,oz2025cross}, liquidations~\cite{qin2021empirical}, and sandwich~\cite{zhou2021high}. To capture these MEV opportunities, searchers construct \textit{bundles}---sequences of one or more transactions---and submit them to builders, along with bids that specify how much value they are willing to share with builders.

\parhead{Block building}
A builder collects available transactions from different sources, including the public mempool as well as direct submissions of transactions or bundles from users and searchers.
We refer to these profitable transaction sets as \textit{order flows}. The builder then executes a block-building algorithm that selects among the available order flows and determines which ones to include. Finally, the selected order flows are packaged into a valid block.
The block-building process and MEV-Boost auction are illustrated in~\autoref{fig:block-building}.

\parhead{Integration} A builder can gain a competitive advantage if it is the sole recipient of profitable bundles from a searcher, since such exclusive access increases its bidding power and therefore the joint profit of the builder and searcher pair involved. This phenomenon is called \textit{integration}~\cite{gupta2023centralizing,yang2024decentralization,oz2024wins}.

A well-known integration pattern exists between Banana Gun and the builder Titan~\cite{yang2024decentralization,oz2024wins}. Banana Gun aggregates many swap transactions that target the same meme-coin liquidity pool and exclusively shares the resulting bundle with Titan. Due to Banana Gun's large volume, Titan can gain a dominant position in the MEV-Boost auction. 
Since transactions in such niche pools are predominantly generated through Banana Gun, with user-initiated swaps being rare, Banana Gun's bundle typically captures all transactions associated with these pools, making the resulting bundle's storage accesses not intersect with those of any other bundle.
\done%

\subsection{VCG Mechanism}\label{sec:vcgback}
The Vickrey--Clarke--Groves (VCG) mechanism is a central achievement in mechanism design theory for selecting outcomes that maximize overall welfare based on valuation reported by strategic agents. Mapping to block building, a transaction $T$ may yield different values (MEV) to its sender depending on the blockchain state in which $T$ executes; thus, the sender can be viewed as a strategic player with a valuation function $v_T(\cdot)$ over blockchain states. The VCG mechanism can solicit truthful valuations (i.e., the reported $v_T(s)$ is exactly the value generated by executing $T$ at state $s$) and select the outcome that maximizes the overall welfare.

We recall the VCG mechanism.
Let there be $n$ agents, each with a valuation function $v_i(\omega)$ over outcomes
$\omega$ in some feasible set $\Omega$. The VCG mechanism selects the welfare-maximizing allocation
$\omega^* \in \arg\max_{\omega \in \Omega} \sum_{i=1}^n v_i(\omega)$. Each agent $i$ pays
the externality they impose on others:
\[
p_i = \max_{\omega \in \Omega} \sum_{j \ne i} v_j(\omega)
      -\sum_{j \ne i} v_j(\omega^*).
\]

Intuitively, each participant pays for ``the harm'' they cause others: an agent $i$'s payment equals the difference between (a) the maximum welfare others could achieve if $i$ were absent and (b) the welfare others actually receive under the chosen outcome $\omega^*$. In this way, any agent $i$'s utility is
$
v_i(\omega^*)-p_i=\sum_j v_j(\omega^*) - \max_{\omega \in \Omega} \sum_{j \ne i} v_j(\omega),
$
which is the overall welfare under the selected outcome minus an amount independent of $i$'s report. Thus, the maximizer of agent $i$'s utility is also the maximizer of the overall welfare.
Agent $i$ is incentivized to report truthfully, as misreporting may cause the mechanism to fail to select the maximizer.
Reporting one's true valuation is the best strategy regardless of others' reports, a property known as \emph{dominant-strategy incentive compatibility (DSIC)}.

The VCG mechanism provides a unifying framework for truthful resource allocation across diverse settings. 
In our work, VCG serves as the conceptual starting point; however, a naive implementation is computationally infeasible.

%% file: sections/3-model.tex
\section{Overview and Modeling}

\subsection{System Overview}

To situate our \mechanism mechanism in a concrete setting, we first present the \name system, whose central part is a trusted execution environment (TEE) executing the logic of \mechanism, as shown in~\autoref{fig:overall}.

\parhead{System model and trust assumptions}
There are three types of parties in \name: proposers, builders, and searchers.
We assume that the underlying blockchain consensus is secure and that proposers are honest and follow the protocol.
Builders and searchers are rational, meaning that they may misreport their values to manipulate outcomes and increase their utility.
Builders may also behave maliciously, for example, by attempting sandwich attacks or imitation attacks against searchers to extract additional value.
We do not model malicious behavior from searchers beyond strategic misreporting, since they typically lack the capabilities needed to mount stronger attacks in this setting.
To focus on the design of \mechanism, we assume a perfect TEE in \name (e.g., we do not consider physical or side-channel attacks). This assumption is consistent with that of production systems such as BuilderNet (in practice, these systems deploy TEEs in cloud environments, which reduces the risk of physical attacks~\cite{rezabek2025proof}), and does not trivialize the design. 
Similarly, denial-of-service (DoS) attacks are out of the scope.

\parhead{\name}
Our system, called \name, is designed for proposers to construct and propose blocks, similar to MEV-Boost. 
The system can be run locally by a proposer who has access to TEEs (in which case other users need a way to route their inputs to proposers), or hosted as a cloud service for proposers to access, similar to the current BuilderNet. 

At the beginning of each slot (\autoref{sec:background}), \name executes the following two stages.
First, the system receives private order flows from users and searchers through an RPC service running inside the TEE. At the same time, it connects to the public mempool to collect public order flows. If valid, these private and public order flows are marked as available for the subsequent block-building process. When submitting their transactions, users and searchers specify the amount they are willing to pay for the inclusion.
In parallel, the system accepts block-building algorithms submitted by builders. 
To curb DoS attacks, \name can employ a reputation system to limit submissions of algorithms, similar to the reputation mechanisms already used for searchers in existing systems~\cite{flashbots2025searcherreputation}. 

At a preconfigured deadline, \name starts to produce a block, following the mechanism to be described in~\autoref{sec:mechanism}.
Roughly, \name runs both the builder-submitted algorithms and our default algorithm (\autoref{sec:implementation}) to build a block and to determine the refunds for builders and searchers.
\done%
Note that builder-submitted algorithms run inside isolated TEEs, which prevents them from altering transactions, inspecting other algorithms, or communicating any information during execution. These restrictions remove the information on which sandwich and imitation attacks rely. 
As a result, even a malicious builder cannot adapt its strategy after observing a searcher's bundle and therefore cannot mount reactive attacks.
Any attempt to front-run would have to be performed without access to real-time information, effectively reducing it to blind pre-submission strategies based on ex ante guesses, which are economically infeasible.
\done%
The incentive compatibility and economic efficiency (i.e., the block value) are determined by the mechanism and the default algorithm, which is the focus of subsequent sections.

\subsection{Game Theoretic Model}
\label{sec:models}
This section establishes the game-theoretic model used to specify and analyze \mechanism.

For simplicity, we refer to both users and searchers collectively as \emph{searchers}, regarding the single transaction as a special type of bundle. As mentioned in the system model, searchers and builders are the strategic players of interest in our game-theoretic model.  

Let $M$ denote the set of $m$ bundles submitted by searchers and $\Omega$ be the set of all feasible blocks that can be constructed from the bundles in $M$. Concretely, one could think of $\Omega$ as all permutations of any subset of bundles. 
For each block $\omega \in \Omega$, the owner of each bundle $i\in M$ has a private, non-negative valuation $v_i(\omega)$. This valuation represents the utility that the bundle derives from inclusion in a particular block. 
Each bundle $i$ also specifies a bid function $b_i$ of $s_i(\omega)$, which is $i$'s execution state in block $\omega$. Here, a bid $b_i\left(s_i(\omega)\right)$ represents the amount that bundle $i$ is willing to pay to a builder 
if included in block $\omega$. If bundle $i \notin \omega$, we assume that $b_i\left(s_i(\omega)\right) = 0$. 

We note that the bid $b_i(s)$ can be a constant (e.g., a fixed tip as long as being included in the block), or it may depend on the resulting execution state $s$. The latter abstracts the common practice where searchers determine their bids to builders based on the balance change from inclusion (which reflects the realized revenue), thus paying differently when being placed in different states, as shown in this example.

\begin{example}[State-dependent proportional bid]
    This example bundle $i$ is taken from bundles following Flashbots' simple-blind-arbitrage strategy~\cite{flashbots2023bot}. 
    The relevant execution state $s_i(\omega)$ is the pair of pool reserves after all transactions in $\omega$ preceding $i$ have executed. Given this state, bundle $i$ computes a trade size from the reserves, performs two swaps, and measures revenue as the increase in its WETH balance. 
    If the revenue is non-positive, the transaction reverts, and no payment is made. Otherwise, it pays a configurable fraction (e.g., 80\%) of that realized revenue to a designated recipient $\texttt{block.coinbase}$ (which could be a builder or a TEE-controlled address in our context), with the remainder kept by the searcher. 
    In this case, the searcher's bid can be expressed as $b(s) =0.8v$ if executing bundle $i$ at state $s$ increases the searcher's WETH balance by $v$, and $b(s) =0$ otherwise.
\end{example}

Given a block $\omega$, the execution state of bundle $i$, $s_i(\omega)$, is determined. For notational convenience, we abbreviate $i$'s bid in block $\omega$, $b_i\left(s_i(\omega)\right)$, simply as $b_i(\omega)$ throughout the rest of the paper. 
We use $b_i$ to denote the function when the context is clear. The collection of all bid functions is denoted by $\mathbf{b}$, while $\mathbf{b}_{-i}$ refers to the bid functions of all bundles except $i$.

Let $N$ be the set of $n$ block-building algorithms submitted by builders. Each algorithm $j\in N$
takes the set $M$ of bundles as input and outputs a candidate block $B_j$, along
with a bid $\beta_j\in \mathbb{R}_{\geq0}$, representing the amount the builder is willing to pay to the mechanism \mechanism if $B_j$ is selected by the mechanism. 

\done%

To capture the practical form of integration, where a searcher shares their bundle exclusively with a single builder, we formalize integration in our model as follows.
\begin{definition}
    A searcher $i$ is said to integrate with builder $j$ if the bundle executed as intended and produces its intended value when builder $j$'s algorithm is run by the mechanism, but reverts and yields zero value under other algorithms.
\end{definition}

\begin{remark}[Integration under \name]\label{remark:integration}
Looking ahead, our mechanism, \mechanism, enforces that every builder-submitted algorithm, as well as the default algorithm, operates on the same set $M$ of bundles. Nevertheless, a strategic searcher can still implement integration through a state-dependent execution trick. 
Concretely, an integrated searcher can construct a bundle that executes only if 
the coinbase address~\cite{wood2014ethereum} equals a predetermined target. Otherwise, the bundle reverts and pays nothing. 
When the integrated algorithm is executed by \mechanism, it sets the \texttt{block.coinbase} to the integrating builder's address (i.e., the target address), thereby ``signaling'' the bundle to execute its intended logic; under all other algorithms, the bundle behaves as a no-op. 
In this way, the same bundle exhibits different behaviors depending on which algorithm is run, allowing searcher-builder integration to re-emerge even under~\name.
\end{remark}

We emphasize that, from the perspective of any non-integrating builder, the integrating bundle $i$ behaves as \emph{empty} or \emph{no-op}: for any blockchain state $s$, executing $i$ leaves the state unchanged, i.e., $s(i) = s$. 
This is strictly stronger than just bidding zero; non-integrating builders can not potentially benefit from the state change by including the bundle because it performs no execution.

Moreover, this integration issue cannot be easily resolved by hiding \texttt{block.coinbase} during execution, since such a solution can also be bypassed: the bundle executes its intended logic only under a specific on-chain state, which is triggered by a particular execution ordering known to the integrated builder.
More broadly, given the basic property that smart contracts can execute differently depending on the on-chain state, we cannot fully eliminate this issue.
What we can do is make such integration harder, in the sense that a bundle can identify its integrated algorithm during execution, but cannot distinguish the default algorithm from other algorithms.

\subsection{Block-building Mechanism}\label{subsec:Block-BuildingMechanism}
The input to a {\em block-building mechanism} consists of a set $M$ of bundles, their bidding functions $\mathbf{b}$, and a set of $n$ builder-submitted algorithms. The output is a block and the net payment of each participant (a player may pay or get paid under the mechanism). 
Technically, the payments are settled by transactions in the output block.

\parhead{A note on how we specify payments.}
Our mechanism implements payment rules by charging an upfront payment, then \emph{refunding} a portion of it, to be compatible with how blockchain transactions work. From a mechanism design's point of view, only the {\em net} payment matters. 
In \name, algorithms running in the TEE can determine refund amounts and embed refund transactions in the final block.

In order to formally analyze the participants' incentives under this refund-based design, we next define the utility functions for searchers and builders.

\begin{definition}[Searcher Utility Function]
    Given a mechanism outcome in which block $\omega \in \Omega$ is selected, the utility of searcher $i \in M$ who bids according to function $b_i$ while receives refund $r_i \in \mathbb{R}_{\geq 0}$ from the mechanism is
    $$u_i^{\text{searcher}} = v_i(\omega) - b_i(\omega) + r_i.$$
\end{definition}

\begin{definition}[Builder Utility Function]
    For an instance $M$ with the bidding vector $\mathbf{b}$, given a mechanism outcome in which builder $j$'s algorithm produces block $B_j$ and bid $\beta_j$, and the mechanism selects block $\omega$ as the winning block, the utility of builder $j$ who receives refund $r_j$ is
    $$
    u_j^{\text{builder}} =
    \begin{cases}
    \sum_{i\in M} b_i(\omega) - \beta_j + r_j, & \text{if } B_j = \omega; \\
    0, & \text{otherwise}.
    \end{cases}
    $$
    Here, $\sum_{i\in M} b_i(\omega)$ represents the builder's gross revenue from the included bids if builder $j$'s block is selected (i.e., $B_j = \omega$). Builders incur no cost if their block is not selected.
\end{definition}

In an ideal scenario, one might aim for a mechanism that satisfies the following properties:  

\begin{itemize}[leftmargin=*]
    \item \textit{Searcher-DSIC}: For any searcher who chooses non-integration, truthfully reporting their private valuation is a dominant strategy, i.e., $b_i(\omega) = v_i(\omega)$ for any block $\omega$. \done%
    \item \textit{Builder-DSIC}: Truthfully reporting their private valuation is a dominant strategy for builders. Note that even if a builder is integrated with some searcher, reporting is still needed and the property requires that truthfully reporting remains a dominant strategy in this case.
    \item \textit{Integration-resistance}: No searcher–builder pair can increase their joint utility by integration.
    \item \textit{Welfare maximization}: The mechanism chooses a block that maximizes the total realized value (sum of searcher valuations) subject to feasibility constraints.
    \item \textit{Computational efficiency}: The mechanism runs in polynomial time in $m$ and $n$.
    \item \textit{Individual rationality (IR)}: Searchers and builders obtain non-negative utility under truthful participation. 
    \item \textit{Budget balance}: The total payments collected by the mechanism are no less than the total money paid out by the mechanism; hence, the mechanism never needs subsidies.
\end{itemize}

\begin{remark}[Integration-resistance vs. Searcher-DSIC]
Integration-resistance emphasizes the equal visibility of the bundle's intended logic, while Searcher-DSIC requires truthful reporting of the realized valuation. 
Importantly, when a searcher integrates, its bundle becomes invisible to all other algorithms, let alone the bidding function, so there is no meaningful notion of truthfulness.
Accordingly, we discuss Searcher-DSIC when the searcher chooses not to integrate.
\end{remark}

In our setting, achieving all properties simultaneously is infeasible. 
The block-building task includes \emph{which} bundles are included and in \emph{what} order, so the size of full block space, $|\Omega|$, grows exponentially with $m$ (the number of bundles). Thus, even if the DSIC property holds, deciding the welfare-maximizing block subsumes classic $\mathsf{NP}$-hard problems~\cite{Mikerah2024knapsack}, implying that the \emph{exact} welfare maximization is computationally intractable unless $\mathsf{P} = \mathsf{NP}$. 

While DSIC and welfare maximization are central objectives in mechanism design, computational efficiency reflects a strict practical constraint in the context of block building. For example, a mechanism must be able to produce a block within about 12 seconds in Ethereum to remain useful for proposers. 
Our ultimate goal, therefore, is to design a mechanism that (1) is always computationally efficient in practice, (2) provably satisfies DSIC in a meaningful set of cases, (3) provably satisfies integration-resistance in the practically concerning scenarios, and (4) achieves welfare maximization as effectively as possible in practice.

%% file: sections/4-mechanism.tex
\section{Mechanism}
\label{sec:mechanism}

This section presents our mechanism in three steps.
~\autoref{subsec:idea} considers an idealized case where the TEE is assumed to have ``unbounded'' computational resources, showing that a direct implementation of the VCG mechanism achieves a desirable outcome (DSIC for searchers and maximizes social welfare), but it is computationally infeasible.
Thus, \autoref{subsec:practical} moves to a scenario with limited computational resources, 
and presents a tractable VCG variant. This mechanism remains DSIC for searchers but may lose some welfare due to the computational power constraint.
The mechanism in \autoref{subsec:practical} does not use builder-submitted algorithms, and finally, \autoref{subsec:mechanism} extends the design by incorporating builder-submitted algorithms to potentially further improve the welfare.

\subsection{TEE with Oracle Access: Direct VCG}\label{subsec:idea}
We begin with an idealized setting in which the TEE is augmented with an oracle, which is, given any bid functions $\mathbf{b}$ from the searchers, capable of returning a block $\omega^* \in \Omega$ that maximizes the sum of all searchers' bids. 
Here, recall that $\Omega$ is the set of all feasible blocks that can be constructed from the bundles in $M$. With such unbounded computational power, the TEE can directly apply the VCG mechanism, which ensures DSIC for searchers. %
We formally introduce the VCG mechanism in our setting below: %
\begin{enumerate}
    \item[(1)] The TEE queries the oracle to obtain an optimal block $\omega^*$ that maximizes the total bid, namely, $\omega^* \in \arg \max_{\omega\in \Omega} \sum_{i\in M}b_i(\omega)$, and outputs it as the final block;
    \item[(2)] The TEE charges each bundle $i$ the amount $b_i(\omega^*)$;
    \item[(3)] For each bundle $i$, the TEE queries the oracle to compute $\max_{\omega\in \Omega} \sum_{j\neq i} b_j(\omega)$,\footnote{Note that this is equivalent to querying the oracle with the bid function $\mathbf{b}$ except that $\mathbf{b_i}$ is changed identically to zero. Since we assumed the oracle can handle any bid function, it is therefore able to provide the answer.} calculates the refund: 
    $$r_i(\mathbf{b}) = \sum_{j\in M} b_j(\omega^*) - \max_{\omega\in \Omega} \sum_{j\neq i} b_j(\omega),$$
    and refund $r_i(\mathbf{b})$ back to the owner of bundle $i$;
    \item[(4)] The TEE transfers the remaining amount $\sum_{i\in M} \left(b_i(\omega^*) - r_i(\mathbf{b}) \right)$ to the proposer.
\end{enumerate}

We note that in step (1), the mechanism only maximizes the \emph{reported welfare} with respect to searchers' bidding functions. However, since the mechanism is DSIC, truthful bidding ensures that the selected block $\omega^*$ also maximizes the sum of searchers' true valuations, i.e., social welfare. 

Combining steps (2) and (3), bundle $i$'s \emph{net} payment is 
$$b_i(\omega^*) - r_i(\mathbf{b}) = \max_{\omega\in \Omega} \sum_{j\neq i} b_j(\omega) - \sum_{j\neq i} b_j(\omega^*),$$
which is the difference between the maximum welfare others could achieve if $i$ were absent and the welfare others actually receive under the chosen outcome $\omega^*$. 

\begin{example}\label{ex:vcg}
    Consider two bundles whose bids depend on whether they execute at the head of the block (i.e., on the state immediately after the previous block) or later. 
    \begin{itemize}
        \item Bundle 1: bids 40 if executed first; 100 if included later.
        \item Bundle 2: bids 50 if executed first; 80 if included later.
    \end{itemize}
    \begin{table}[h!]
        \centering
        \begin{tabular}{c|c|c|c}
            \hline\hline
            Block $\omega$    &   $b_1(\omega)$   &   $b_2(\omega)$   &   Total Bid   \\ \hline
            [1]         &   40      &   0       &   40       \\ \hline
            [2]         &   0       &   50      &   50      \\  \hline
            [1, 2]      &   40      &   80      &   120      \\ \hline
            [2, 1]      &   100     &   50      &   150      \\ \hline \hline
        \end{tabular}
        \caption{Bids under different blocks.}
        \label{table:ideal}
    \end{table}
    
    In this example, the optimal block is $\omega^* = [2, 1]$ with a total bid of 150. To compute refunds,
    \begin{itemize}
        \item For bundle 1: the maximum total bid without $b_1$ is $\max_{\omega\in \Omega} \sum_{j\neq i} b_j(\omega) = \max_{\omega\in \Omega} b_2(\omega) = b_2([1,2]) = 80$. Thus $r_1(\mathbf{b}) = 150 - 80 = 70$.
        \item For bundle 2: the maximum total bid without $b_2$ is $\max_{\omega\in \Omega} \sum_{j\neq i} b_j(\omega) = \max_{\omega\in \Omega} b_1(\omega) = b_1([2,1]) = 100$. Thus $r_2(\mathbf{b}) = 150 - 100 = 50$.
    \end{itemize}
    Overall, the TEE proposes [2, 1] as the final block and transfers 30 to the proposer.
\end{example}

\begin{theorem}[\cite{vickrey1961counterspeculation,clarke1971multipart,groves1973incentives}]
    It is a dominant strategy for each bundle to truthfully report its value (namely, setting $b_i = v_i$).
\end{theorem}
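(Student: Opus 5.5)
We prove the claim with the standard Groves argument, instantiated for the searcher utility function defined in \autoref{subsec:Block-BuildingMechanism} and the direct VCG mechanism of steps (1)--(4). Fix a bundle $i$ and arbitrary bid functions $\mathbf{b}_{-i}$ of the other bundles. Let $\omega^*(b_i,\mathbf{b}_{-i})$ be the block returned by the oracle in step (1) when $i$ reports $b_i$. Combining the upfront charge of step (2) with the refund of step (3), $i$'s utility when reporting $b_i$ is
\[
u_i(b_i) = v_i(\omega^*) - b_i(\omega^*) + \sum_{j\in M} b_j(\omega^*) - \max_{\omega\in\Omega}\sum_{j\neq i} b_j(\omega)
= v_i(\omega^*) + \sum_{j\neq i} b_j(\omega^*) - h_i(\mathbf{b}_{-i}),
\]
where $h_i(\mathbf{b}_{-i}) := \max_{\omega\in\Omega}\sum_{j\neq i} b_j(\omega)$. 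The key point is that the term $b_i(\omega^*)$ cancels, and $h_i$ depends only on $\mathbf{b}_{-i}$, since the feasible set $\Omega$ is fixed by $M$ and not by the bids.

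The argument then proceeds as follows.
\begin{enumerate}
\item Under any report $b_i$, the selected block is some $\omega' \in \Omega$, so $u_i(b_i) = v_i(\omega') + \sum_{j\neq i} b_j(\omega') - h_i \le \max_{\omega\in\Omega}\bigl(v_i(\omega) + \sum_{j\neq i} b_j(\omega)\bigr) - h_i$.
\item Under the truthful report $b_i = v_i$, the oracle selects $\omega^* \in \arg\max_{\omega}\bigl(v_i(\omega)+\sum_{j\neq i} b_j(\omega)\bigr)$. This is exactly the maximum in step 1, so truthful reporting attains the upper bound.
\item Since $\mathbf{b}_{-i}$ was arbitrary, truthful reporting weakly dominates every deviation, which is DSIC.
\end{enumerate}

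The ways the oracle breaks ties among maximizers do not matter: every maximizer gives $i$ the same utility, and step 2 only needs the selected block to be some maximizer.

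The main obstacle is modeling rather than algebra. We need the report $b_i$ to influence $i$'s utility only through the choice of $\omega^*$ and through the cancelled charge $b_i(\omega^*)$. In particular, $v_i$ and the other bundles' bids evaluated at a given block must not depend on $i$'s report. This holds in the model because $b_j(\omega)$ is a function of the execution state $s_j(\omega)$, which is determined by the block alone.

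We should also confirm that the reasoning is valid even though the refund $r_i$ is constrained to be non-negative. Truthfulness does not actually need this constraint; it only requires the utility identity above. For completeness, we note that $r_i \ge 0$ holds because $\omega^*$ maximizes the total bid, so $\sum_j b_j(\omega^*) \ge \sum_{j\neq i} b_j(\omega)$ for every $\omega$ whenever the bids are non-negative.
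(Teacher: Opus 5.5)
Your proof is correct and takes essentially the same route as the paper. The paper cites the classical VCG result, and both its background on VCG and its proof of Theorem~\ref{theorem:variantDSIC} argue via your decomposition: $i$'s utility is $v_i(\omega^*)+\sum_{j\neq i} b_j(\omega^*)$ minus the term $\max_{\omega\in\Omega}\sum_{j\neq i} b_j(\omega)$, which does not depend on $i$'s report, and truthful reporting makes the first part exactly the objective the oracle maximizes (your explicit upper bound over deviations and tie-breaking remark just make this rigorous).
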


\subsection{Efficient VCG-based Mechanism}\label{subsec:practical}
The VCG mechanism works well except for the assumption of the ideal oracle. In practice, the size of $m$ is in the hundreds or thousands, causing simulating all feasible blocks in $\Omega$ to be computationally intractable. Instead, in this section, we propose a framework for designing mechanisms that remain DSIC for searchers and run efficiently in practice. 

The main idea is to design VCG-based mechanisms, which preserve the steps (2) - (4) of the VCG mechanism while replacing the oracle in step (1) with an efficient algorithm denoted by $\mathcal{A}$. 
The algorithm $\mathcal{A}$ takes the set $M$ of bundles and their bids $\mathbf{b}$ as inputs and outputs $m+1$ blocks introduced below, where $m$ is the number of bundles in $M$. 

To reach the goal of running efficiently and maintaining truthfulness, we require the algorithm $\mathcal{A}$ to satisfy certain key properties, formalized in~\autoref{property}. 
At a high level, such an algorithm $\mathcal{A}$ operates on a subset of blocks $\Omega_\mathcal{A} \subseteq \Omega$, which must be independent of the bid functions. Moreover, within this subset, the algorithm should be able to efficiently find a block $o^* \in \Omega_\mathcal{A}$ that maximizes the total bid. 
The extent to which the mechanism can approximate the optimal welfare ultimately depends on the performance of $\mathcal{A}$.

\begin{property}\label{property}
The desired properties that an algorithm $\mathcal{A}$ should satisfy are as follows.
    \begin{enumerate}
    \item[(i)] For any set $M$ of bundles, the set of blocks $\Omega_\mathcal{A}$ considered by $\mathcal{A}$ does not depend on the bundles' bids. Formally, for all $M$ and for all bid profiles $\mathbf{b}, \mathbf{b'}$, we have $\Omega_\mathcal{A}(M, \mathbf{b}) = \Omega_\mathcal{A}(M, \mathbf{b'})$.  
    \item[(ii)] $\mathcal{A}$ can compute the following in time polynomial in $m$: 1) $o^* \in \arg \max_{o\in \Omega_\mathcal{A}} \sum_{i\in M} b_i(o)$, and 2) for each $i\in M$, the counterfactual optimal block $o_{-i} \in \arg \max_{o\in \Omega_\mathcal{A}} \sum_{j\neq i} b_j(o)$.
\end{enumerate}
\end{property}

Condition (i) means that the feasible set $\Omega_\mathcal{A}$ is independent of bids; this ensures that reducing the feasible set does not affect bidders' strategic bidding and, consequently, the truthfulness analysis.
Condition (ii) emphasizes optimality and computational feasibility after reducing the feasible set.
Looking forward, in~\autoref{sec:implementation}, we present a practical block building algorithm that satisfies these properties.

\parhead{VCG-based Mechanism:} 
With an algorithm $\mathcal{A}$ satisfying the above properties, we can construct the following VCG-based mechanism, denoted by $\texttt{VCG}_\mathcal{A}$:
    \begin{enumerate}
        \item[(1)] The TEE runs $\mathcal{A}$ to get  ``optimal'' blocks $o^*$ and $\{o_{-i}\}_{i\in M}$, and outputs $o^*$ as the final block;
        \item[(2)] The TEE charges each bundle $i$ the amount $b_i(o^*)$;
        \item[(3)] The TEE refunds each bundle $i$:  
        
        \begin{equation}
            r_i(\mathbf{b}) = \sum_{j\in M} b_j(o^*) - \sum_{j\neq i} b_j(o_{-i});
            \label{eq:refundfunction}
        \end{equation}
        \item[(4)] The TEE transfers the remaining amount $\sum_{i\in M} \left(b_i(o^*) - r_i(\mathbf{b}) \right)$ to the proposer.
    \end{enumerate}

The following theorem shows that $\texttt{VCG}_\mathcal{A}$ preserves truthfulness as long as the algorithm $\mathcal{A}$ satisfies~\autoref{property}.

\begin{theorem}\label{theorem:variantDSIC}
    For any algorithm $\mathcal{A}$ that satisfies conditions (i) and (ii), the resulting mechanism $\texttt{VCG}_\mathcal{A}$ operates efficiently and is DSIC for searchers.
\end{theorem}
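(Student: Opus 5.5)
The argument splits into the two claims in \autoref{theorem:variantDSIC}. Efficiency is immediate from condition (ii): $\mathcal{A}$ returns $o^*$ and the $m$ counterfactual blocks $\{o_{-i}\}_{i\in M}$ in time polynomial in $m$. Given these blocks, steps (2)--(4) of $\texttt{VCG}_\mathcal{A}$ only need the bids $b_j(o^*)$ and $b_j(o_{-i})$ for all $i,j$. That is $O(m^2)$ bid evaluations (each a simulation of a bundle at a given execution state, which we treat as unit or polynomial cost) followed by $O(m^2)$ arithmetic operations. So the mechanism as a whole runs in polynomial time.

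For searcher-DSIC I would reproduce the classical VCG argument with $\Omega$ replaced by $\Omega_\mathcal{A}$.

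\emph{Step 1: write searcher $i$'s utility.} Fix the other bids $\mathbf{b}_{-i}$ and any report $b_i$ of searcher $i$, and let $o^*$ and $o_{-i}$ be the blocks $\mathcal{A}$ returns on $(M,(b_i,\mathbf{b}_{-i}))$. By the definition of searcher utility and \eqref{eq:refundfunction},
\[
u_i = v_i(o^*) - b_i(o^*) + \sum_{j\in M} b_j(o^*) - \sum_{j\neq i} b_j(o_{-i}) = \Bigl(v_i(o^*) + \sum_{j\neq i} b_j(o^*)\Bigr) - h_i,
\]
where $h_i = \max_{o\in\Omega_\mathcal{A}} \sum_{j\neq i} b_j(o)$.

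\emph{Step 2: show $h_i$ does not depend on $b_i$.} This is the key use of condition (i): the set $\Omega_\mathcal{A}$ is the same for every bid profile, so the maximum defining $h_i$ is taken over a fixed set and involves only $\mathbf{b}_{-i}$. Condition (ii) ensures that $\mathcal{A}$ actually attains this maximum, so $\sum_{j\neq i} b_j(o_{-i}) = h_i$ exactly, whatever $i$ reports. Without condition (i), a misreport could enlarge or shrink the candidate set and thereby change $h_i$.

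\emph{Step 3: compare with truthful reporting.} Searcher $i$ therefore only affects the first term, $v_i(o^*) + \sum_{j\neq i} b_j(o^*)$, and only through which block $o^*\in\Omega_\mathcal{A}$ is selected. Under truthful reporting $b_i=v_i$, condition (ii) makes $o^*$ a maximizer over $\Omega_\mathcal{A}$ of exactly this expression. Any misreport selects some other $o'\in\Omega_\mathcal{A}$ (the same fixed set, by (i)), whose value of the expression is at most the maximum. Hence truthful reporting is a dominant strategy.

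The main obstacle is conceptual rather than computational. One must check that the refund really uses the exact counterfactual maximizer over the \emph{same} bid-independent set: if $o_{-i}$ were only approximately optimal, or if $\Omega_\mathcal{A}$ shifted with the bids, $h_i$ could depend on $b_i$ and DSIC would fail. A secondary subtlety is tie-breaking in $\arg\max$, which does not affect the value of $h_i$ or the utility bound, so I would remark that any tie-breaking rule suffices.
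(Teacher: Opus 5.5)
Your proposal is correct and follows essentially the same route as the paper: efficiency comes from condition (ii), and $u_i$ splits into $v_i(o^*)+\sum_{j\neq i}b_j(o^*)$, which $i$ affects only through the chosen block, minus $\max_{o\in\Omega_\mathcal{A}}\sum_{j\neq i}b_j(o)$, which condition (i) makes independent of $b_i$. Truthful bidding then makes $\mathcal{A}$ maximize exactly the first term over the fixed set $\Omega_\mathcal{A}$; your explicit comparison against a misreport's block $o'$ and your count of bid evaluations for steps (2)--(4) simply spell out details the paper leaves implicit.
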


\begin{proof}
    Condition (ii) enables the mechanism $\texttt{VCG}_\mathcal{A}$, especially the step (1), to run efficiently.

    Fix a bundle set $M$, an arbitrary bundle $i$, and others' bids $\mathbf{b}_{-i}$. Given any $b_i$, suppose the chosen best block is $o^*\in \Omega_\mathcal{A}(M,\mathbf{b})$. Then $i$'s utility is: $$v_i(o^*) - b_i(o^*) + r_i = \underbrace{\left[ v_i(o^*) + \sum_{j \ne i} b_j(o^*) \right]}_{(\text{A})} -\underbrace{\left[ \max_{o \in \Omega_\mathcal{A}} \sum_{j \ne i} b_j(o) \right]}_{(\text{B})}.$$ 

    First, by condition (i), for any $i$'s bid $b_i$, $b_i'$, we have $\Omega_\mathcal{A}(M, \mathbf{b}_{-i}, b_i) = \Omega_\mathcal{A}(M, \mathbf{b}_{-i}, b_i')\triangleq \Omega_\mathcal{A}$, thus $\Omega_\mathcal{A}$ and the entire term $(B)$ does not depends on $i$'s bid. 
    Then, to maximize utility, bundle $i$ should aim to maximize the first term (A). If bundle $i$ sets $b_i = v_i$, the term (A) becomes $b_i(o^*) + \sum_{j \ne i} b_j(o^*) = \sum_{j\in M} b_j(o^*)$, a function of the selected block, which is exactly the algorithm $\mathcal{A}$'s consideration when choosing $o^*$ and the objective that the mechanism maximizes in selecting the final block. 
    Thus, truthful reporting aligns $i$'s incentive with the mechanism's objective and leads to a utility-maximizing outcome for $i$. %
\end{proof}

\subsection{Beyond One Algorithm: Our Mechanism with Multiple Builders}\label{subsec:mechanism}

The mechanism $\texttt{VCG}_\mathcal{A}$ has searcher-DSIC and computational efficiency properties.
However, it uses a single fixed block-building algorithm $\mathcal{A}$. 
One natural next step is to elicit and leverage the additional sophistication of builders by asking builders to submit their own block-building algorithms. To differentiate, we refer to $\mathcal{A}$ as the {\em default algorithm} of our mechanism, which, as we will show shortly, plays a crucial role in achieving searcher-DSIC.

\parhead{Builder-submitted algorithms}
A builder-submitted algorithm takes the set $M$ of bundles and their bidding functions $\mathbf{b}$
as input, and outputs a block and a bid value to be paid to our mechanism. 
Each algorithm is submitted privately by a builder, and the mechanism executes it on the common set of received bundles to generate candidate blocks that compete with the block produced by the default algorithm. 

The mechanism also checks that every algorithm's output is a permuted subset of the input (i.e., the algorithm does not insert new transactions).
This constraint is not only necessary to prevent covert channels through which builders could send exclusive bundles to their algorithms, but also reflects current practice.
In practice, builders are cautious when inserting transactions, as arbitrary insertion would visibly signal builder power and may make searchers fear targeted attacks. As a result, existing builders typically claim neutrality and limit insertion to a single case: subsidizing transactions that cannot pay fees upfront but yield higher payments if executed~\cite{titan2023teatime}.
Enforcing permutation-only outputs therefore aligns with both security goals and real-world builder practices.

\begin{figure*}[!ht]
\centering
\begin{tcolorbox}[colframe=black, colback=white, width=0.95\textwidth, boxrule=0.8pt]

\textbf{Input:} A set $M$ of bundles, their bidding functions $\mathbf{b}$, and a set of $n$ builder-submitted algorithms. Each algorithm $j$ takes $M$ as input and outputs a candidate block $B_j$ along with a bid $\beta_j$.

\vspace{0.5em}
\textbf{Output:} A final block for the proposer and an allocation of the block's total value.

\vspace{1em}
\textbf{Mechanism:}
\begin{enumerate}
    \item[(0)] Identify all \emph{conflict-free} bundles $S \subseteq M$. 
    \item[(1)] Run the default algorithm $\mathcal{A}$ on the remaining bundles $M \setminus S$ to generate: 
    \begin{enumerate}
        \item[(i)] an optimal block $o^* \in \arg \max_{o\in \Omega_\mathcal{A}} \sum_{i\in M\setminus S} b_i(o)$ as the default block, with value $\beta_0 \coloneqq\sum_{j\in M\setminus S} b_j(o^*)$;
        \item[(ii)] a ``counterfactual optimal'' block $o_{-i} \in \arg \max_{o\in \Omega_\mathcal{A}} \sum_{j\neq i} b_j(o)$ for each $i\in M \setminus S$.
    \end{enumerate}
    
    \item[(2)] For each bundle $i\in M\setminus S$, compute the refund $r_i(\mathbf{b}) = \sum_{j\in M\setminus S} b_j(o^*) - \sum_{j\neq i} b_j(o_{-i})$.

    \item[(3)] Run all builder-submitted algorithms on the bundle set $M \setminus S$ to obtain $n$ candidate blocks $\{B_1, \cdots, B_n\}$ and their bids $\{\beta_1, \cdots, \beta_n\}$. 
    Let $B^*$ be the block with the highest bid $\beta^* = \max_{j\in [n]} \{\beta_j\}$ (\textit{arbitrary tie-breaking}), and $B'$ be the block with the second-highest bid $\beta' = \max \{ \beta_j \mid j\in [n] \text{ and } j \neq \text{builder of } B^* \}$.
    \item[(4)] Determine the final output as follows:
    
    \textbf{Case (4a).} $\beta_0 \geq \beta^*$ (default algorithm wins). 
        \begin{itemize}
            \item[-] Final block: $o^* \gets o^* \cup S$, i.e., append the conflict-free bundles to the default block. %
            \item[-] Builder payments: All builders pay nothing.
            \item[-] Searcher payments: Each bundle $i\in M$ pays $b_i(o^*)$ to the mechanism. The mechanism refunds $r_i$ to each bundle $i\in M \setminus S$ and refunds $b_i(o^*)$ to each bundle $i\in S$.
            \item[-] Proposer revenue: The proposer receives the remaining value $\beta_0 - \sum_{i\in M \setminus S} r_i$.
        \end{itemize}
        
        \textbf{Case (4b).} $\beta_0 < \beta^*$ (a builder wins).
        \begin{itemize}
            \item[-] Final block: $B^* \gets B^* \cup S$, i.e., append the conflict-free bundles to the winning builder's block.
            \item[-] \label{itm:4b-refund} Builder payments: The winning builder pays $\beta^* + \sum_{i\in S} b_i(B^*)$ to the mechanism, which then refunds it the surplus $\beta^* - \max\{\beta_0, \beta'\}$. (All other builders pay $0$.)%
            \item[-] Searcher payments: Each bundle $i\in M$ pays $b_i(B^*)$ to the winning builder. The mechanism refunds $r_i$ to each bundle $i\in M \setminus S$ and refunds $b_i(B^*)$ to each bundle $i\in S$.
            \item[-] Proposer revenue: The proposer receives the remaining value $\max\{\beta_0, \beta'\} - \sum_{i\in M \setminus S} r_i$.
        \end{itemize}
\end{enumerate}
\end{tcolorbox}
\caption{Our mechanism \mechanism}
\label{fig:mechanism}
\end{figure*}

\parhead{Our mechanism \mechanism}
As specified in~\autoref{fig:mechanism}, \mechanism is largely based on $\texttt{VCG}_\mathcal{A}$, extending it by adding step~(3) and Case~(4b), and a preliminary Step~(0). 
Conceptually, the mechanism proceeds in two phases. 
In the first phase, including steps (1) and (2), the mechanism runs the default algorithm to obtain a default block $o^*$ and compute the refund $r_i$ to each bundle $i\in M$. 
In the second phase, including steps (3) and (4), the mechanism runs all builder-submitted algorithms to generate additional candidate blocks, and determines the winning block and the value distributed to builders and the proposer. 

Before these two phases, \mechanism first identifies the set $S \subseteq M$ of all \emph{conflict-free} bundles in step~(0). A bundle is called \emph{conflict-free} if its execution neither affects nor depends on that of any other bundle in $M$.
In practice, conflict-free bundles can be identified by analyzing the read/write storage slots accessed by each bundle.\footnote{If a bundle reads from and writes to storage slots that are disjoint from those of all others, its inclusion will not alter nor be altered by the inclusion or ordering of other bundles.}
Looking ahead,~\autoref{sec:empirical-insights} details this approach and the \texttt{GetConflictGroups} function in~\autoref{alg:default} provides an efficient procedure of grouping bundles according to their storage-level interactions. 
Here, by invoking this function and selecting groups containing only a single bundle, the mechanism can efficiently identify all conflict-free bundles $S$.

Since conflict-free bundles can be executed in any state without affecting the rest,~\mechanism handles them in two stages. 
First, it withholds these bundles $S$ and runs all algorithms on the remaining set $M\setminus S$. 
Then, it appends $S$ to each candidate block.\footnote{We assume that EIP-1559's base fee adjustment rule works well, so all bundles in~$M$ can fit within a single block.} 
This update is uniform across all candidate blocks and does not affect the winner selection. Thus, for simplicity, we omit this global update from intermediate steps and apply it only to the winning block in step~(4). 
This two-stage design offers two benefits: 
\textit{(i)} Conflict-free bundles require no sophisticated ordering, so separating them reduces the algorithm's input and improves efficiency; and 
\textit{(ii)} Looking ahead, this serves as a crucial ingredient to achieve integration-resistance for conflict-free bundles, which capture one of the most concerning integration scenarios: as noted in~\autoref{sec:ethereum-block-building}, Banana Gun's bundle rarely interacts with storage touched by others and is thus a real example of the conflict-free bundle. 

Among all candidate blocks from the default algorithm and builder-submitted algorithms, the one with the highest bid value is selected as the winning block. Here, the bid of a block is defined as \mechanism's balance increase.\footnote{\mechanism can receive payments and disburse rewards; in practice, this can be done by having the TEE running \mechanism control a blockchain account.}
For the default block $o^*$, this value is $\beta_0 \coloneqq\sum_{j\in M\setminus S} b_j(o^*)$, i.e., total payment from included bundles to the mechanism. For a block $B_j$ produced by builder $j$'s algorithm, this is their bid~$\beta_j$. 

If the default algorithm wins (Case 4a), the mechanism finalizes the default block by appending the conflict-free bundles, i.e., $o^* \gets o^* \cup S$, and uses the resulting block as the final output. Under this complete block, each searcher $i\in M$ pays $b_i(o^*)$ to \mechanism, so the mechanism collects $\beta_0 + \sum_{i\in S} b_i(o^*)$ in total. The mechanism then allocates this value among all searchers and the proposer: each searcher $i\in M\setminus S$ receives the refund $r_i$ precomputed in~step (2), each searcher $i\in S$ receives their bid $b_i(o^*)$ as refund (resulting in zero net payment), and the remaining value $\beta_0 - \sum_{i\in M\setminus S} r_i$ goes to the proposer. 
Note that in this case, the mechanism essentially degenerates into the VCG-based mechanism $\texttt{VCG}_\mathcal{A}$ in~\autoref{subsec:practical}. In particular, we use the default algorithm to generate the final block and use the same refund function Eq.~(\ref{eq:refundfunction}) to compute the refund to searchers.\footnote{If the algorithm $\mathcal{A}$ in $\texttt{VCG}_\mathcal{A}$ includes a conflict-free bundle~$i$ in~$o^*$, then computing the refund following Eq.~(\ref{eq:refundfunction}) will yield the same value as set in Case~(4a), i.e., the bundle's own bid~$b_i(o^*)$.}
The default algorithm's winning implies that builder-submitted algorithms do not contribute to improving reported welfare over the default. Thus, the mechanism rewards nothing to builders, and transfers the remaining value to the proposer, exactly as in $\texttt{VCG}_\mathcal{A}$.

If a builder-submitted algorithm wins (Case 4b), likewise, the mechanism finalizes the winning block by appending the conflict-free bundles, i.e., $B^* \gets B^* \cup S$, and uses it as the final block. 
Under this complete block, searchers pay a total of $\sum_{i\in M} b_i(B^*)$ to the winning builder, who in turn pays their updated bid $\beta^* + \sum_{i\in S} b_i(B^*)$ to the mechanism.\footnote{This can be implemented by requiring builders to post collateral, which can then be reduced by \name.} %
The mechanism then distributes this amount among searchers, builders, and the proposer: it refunds $r_i$ to each bundle $i\in M\setminus S$ and $b_i(B^*)$ to each bundle $i\in S$, refunds the surplus between $\beta^*$ and $\max\{\beta_0, \beta'\}$ to the winning builder, and transfers the rest to the proposer. 
Intuitively, this mirrors a second-price auction among builders. The winning builder pays just enough to outbid the second-best algorithm, and pockets the surplus as profit. In this way, Case (4b) extends the VCG-based mechanism by rewarding builders for contributing algorithms that generate strictly better blocks, thus fostering innovation. %

We emphasize that in Case (4b), the refund $r_i$ to each bundle $i\in M\setminus S$ is computed solely from the default algorithm in step~(2), while the bid of each bundle $i\in S$ (namely, their refund) is a fixed value independent of ordering. That means in either case, the refunds to searchers are identical, regardless of which block is ultimately selected. 
One might wonder why we do not always use the \emph{winning} algorithm to calculate the refund. Namely, when a builder-submitted algorithm wins (Case 4b), define the refund to each $i\in M\setminus S$ instead as 
\begin{equation}\label{eq:alternative}
r_i' =
\begin{cases}
w_i, & \text{if } \sum_{j\in M\setminus S} w_j \leq \max\{\beta_0, \beta'\}; \\
f(w_i),  & \text{otherwise},
\end{cases}
\end{equation} 
where $w_i = \beta^* - \beta_{-i}$, with $\beta_{-i}$ denoting the winning builder's counterfactual bid if $b_i$ were always $0$. Here, $f(w_i)$ is any function satisfying that $f(w_i)\geq 0$ for all $w_i$ and $\sum_{i\in M\setminus S} f(w_i) \leq \max\{\beta_0, \beta'\}$. 
While seemingly natural, the modified mechanism with this alternative refund is vulnerable: a strategic builder could collude with a searcher to steal value from the proposer and other searchers, as illustrated below.

\begin{restatable}{observation}{obalternative}\label{ob:alternative}
    For any instance $M$ and $\textbf{b}$ such that the default algorithm outperforms all builder-submitted algorithms, the alternative refund rule $r_i'$ above is not robust against strategic collusion between a builder and a searcher.
\end{restatable}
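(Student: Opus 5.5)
We prove the observation by explicit construction. Fix any instance $M$, $\mathbf{b}$ in which the default algorithm outperforms every builder-submitted algorithm, so that $\beta_0 \ge \beta_j$ for all $j\in[n]$. Honest play then lands in Case (4a): no builder is paid and the proposer receives $\beta_0-\sum_{i\in M\setminus S} r_i$.

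We then exhibit a builder $j$ and a searcher $i\in M\setminus S$ whose joint deviation moves the outcome into Case (4b) with the alternative refund $r'$. This raises their combined utility strictly, and the gain comes out of the proposer's revenue and the other searchers' refunds. That suffices to show $r'$ is not collusion-robust.

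\textbf{The deviation.} Searcher $i$ inflates its bid on one specific block $\hat B$ that builder $j$'s algorithm is rigged to output. Concretely, $b_i'(\hat B)=b_i(\hat B)+K$ for a large $K$, and $b_i'$ agrees with $b_i$ elsewhere. If realizable only through execution state, this can be done via the coinbase-signalling trick of Remark~\ref{remark:integration}: the bundle pays the extra $K$ only when \texttt{block.coinbase} equals $j$'s address. Builder $j$ outputs $\hat B$ with bid $\beta_j'=\beta_0+\epsilon+K'$ for suitable $K'\le K$, so that $\beta_j'>\beta_0\ge\beta'$ and builder $j$ wins.

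The counterfactual bid is $\beta_{-i}$, the bid $j$ would make if $b_i\equiv 0$. Builder $j$ reports it as small, e.g.\ $\beta_{-i}=0$, which is defensible because without $i$'s inflated payment the block is worth little to $j$. This makes $w_i=\beta_j'-\beta_{-i}$ large. Keep the other $w_k$ small by having $j$ report $\beta_{-k}\approx\beta_j'$ for $k\ne i$.

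Now choose parameters so that $\sum_k w_k\le\max\{\beta_0,\beta'\}=\beta_0$. In that branch each bundle is refunded exactly $w_k$, so $i$ receives $w_i\approx\beta_0$. Alternatively, put the sum in the second branch and note that $f$ may allocate almost all of $\beta_0$ to $i$.

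\textbf{Accounting.} Compute the coalition's net transfer. Searcher $i$ pays $b_i'(\hat B)$ to the builder, so that money is internal to the coalition. Builder $j$ pays $\beta_j'$ to the mechanism and gets back $\beta_j'-\beta_0$. Searcher $i$ receives $r_i'\approx\beta_0$.

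The coalition's net value is therefore roughly $v_i(\hat B)+\sum_{k\ne i}b_k(\hat B)-\beta_0+r_i'$. Under honesty it would be only $v_i(o^*)-b_i(o^*)+r_i$. Since $r_i'$ can be pushed up to about $\beta_0$, which exceeds the VCG refund $r_i$ whenever the proposer's honest revenue is positive, the deviation is profitable. Meanwhile the proposer's share $\beta_0-\sum_k r_k'$ falls toward $0$, and the other searchers' refunds shrink.

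\textbf{Main obstacle.} The step I expect to be hardest is making this work for \emph{every} instance satisfying the hypothesis, including degenerate ones. For example, $\beta_0$ might equal $\sum_k r_k$, or no non-conflict-free bundle might exist. I would handle this by reading ``not robust'' as: there is a coalition that strictly gains and strictly harms others whenever the honest proposer revenue is positive. In the degenerate cases I would fall back on the second branch of \eqref{eq:alternative}, since $f$ is arbitrary subject only to its budget constraint.

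I would also check that the inflation $K$ is self-financing. The searcher's extra payment $K$ goes to builder $j$ and is recovered within the coalition, so the coalition's net utility is unchanged by the choice of $K$. $K$ therefore only serves to make $j$'s bid $\beta_j'$ credible.
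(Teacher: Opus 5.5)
Your central idea matches the paper's. A colluding builder outbids the default by a hair and, since its counterfactual bids are unverifiable self-reports, chooses them so that $w_i$ absorbs the whole budget $\max\{\beta_0,\beta'\}=\beta_0$ while $w_k=0$ for $k\neq i$. The first branch of \eqref{eq:alternative} then pays $i$ the full $\beta_0$. As written, however, three steps fail.

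First, $\beta_{-i}=0$ gives $w_i=\beta_j'=\beta_0+\epsilon+K'>\beta_0$. Then $\sum_k w_k>\beta_0$, which pushes you into the second branch and contradicts your next sentence. You need $\beta_{-i}=\beta_j'-\beta_0$ and $\beta_{-k}=\beta_j'$ exactly.

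Second, the fallback to the second branch is unavailable. $f$ is an arbitrary admissible function fixed by the designer (e.g.\ $f\equiv 0$), not chosen by the coalition. So the attack must live in the first branch, which does not depend on $f$.

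Third, your profitability step only compares $r_i'$ with $r_i$, but your own accounting shows the coalition's value also depends on the block. After deviation it is $v_i(\hat B)+\sum_{k\neq i}b_k(\hat B)$; under honest play it is $v_i(o^*)+\sum_{k\neq i}b_k(o^*)-\sum_{k\neq i}b_k(o_{-i})$. For an arbitrary rigged $\hat B$ the former can be smaller. The missing choice is $\hat B=o^*$: builder $j$'s algorithm simply copies the default algorithm, and the gain is then exactly $\sum_{k\neq i}b_k(o_{-i})$.

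With these fixes, the inflation $K$ and the coinbase trick become superfluous. Once $j$ wins, its utility $\sum_{k\in M\setminus S}b'_k(\hat B)-\max\{\beta_0,\beta'\}$ does not depend on $\beta_j'$. The mechanism also imposes no credibility or ``defensibility'' constraint on reported bids; that is precisely the vulnerability. The paper's attack keeps searcher $i$ truthful and has $j$ bid $\beta_0+\epsilon$ and report $\beta_{-i}=\epsilon$, $\beta_{-k}=\beta_0+\epsilon$. Then $j$ nets zero and $i$ gains $\sum_{k\neq i}b_k(o_{-i})$ at the expense of the proposer and the other searchers.

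Your worry about degenerate instances is legitimate: the paper's strict inequality tacitly needs $\sum_{k\neq i}b_k(o_{-i})>0$. But when this fails, $r_i=\beta_0$ already, and with nonnegative refunds $r_i'\le\max\{\beta_0,\beta'\}=\beta_0$ in either branch. The correct response is therefore to state that nondegeneracy condition explicitly, not to appeal to $f$.
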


Due to the page limit, we postpone the proof to~\autoref{sec:obproof}. Intuitively, the value $\beta^*-\beta_{-i}$ signals the marginal importance of bundle $i$, which decides the refund. By collusion, a builder could make $i$'s bundle appear more critical than it truly is, by misreporting $\beta$'s.
As a result, $i$ (together with the colluding builder) receives an inflated refund at the expense of others. 

\subsection{Analysis of Our Mechanism}

We start with analyzing the refund rule in step (2) of \mechanism. 

Recall that the refund to each bundle $i \in M\setminus S$ is
$$r_i(\mathbf{b}) = \sum_{j\in M\setminus S} b_j(o^*) - \sum_{j\neq i} b_j(o_{-i}),$$
where $o^* \in \arg \max_{o\in \Omega_\mathcal{A}} \sum_{i\in M\setminus S} b_i(o)$ is the optimal block selected by the default algorithm $\mathcal{A}$, and $o_{-i} \in \arg \max_{o\in \Omega_\mathcal{A}} \sum_{j\neq i} b_j(o)$ is the suboptimal block without considering bundle $i$'s bid.

\begin{claim}
    The refund to each bundle $i\in M$ is non-negative.
\end{claim}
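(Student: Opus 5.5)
The claim follows from the optimality of $o^*$ and from bids being non-negative. I will treat the two kinds of bundles separately.

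\textbf{Conflict-free bundles.} For $i\in S$, the mechanism refunds exactly $b_i(o^*)$ in Case (4a) and $b_i(B^*)$ in Case (4b). Both are values of a bid function. Bids are amounts a bundle is willing to pay, so they are non-negative, and a bundle not included in the block bids $0$. Hence these refunds are non-negative.

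\textbf{Remaining bundles.} For $i\in M\setminus S$, I will rewrite the refund by splitting off bundle $i$'s own term:
\[
r_i(\mathbf{b}) = b_i(o^*) + \Big[\sum_{j\neq i} b_j(o^*) - \sum_{j\neq i} b_j(o_{-i})\Big].
\]
This alone does not settle the sign, because $o_{-i}$ maximizes $\sum_{j\neq i} b_j$, so the bracket is $\le 0$. I will therefore compare against $o^*$ instead.

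By Property~(ii) and condition~(i), both $o^*$ and $o_{-i}$ lie in the same bid-independent set $\Omega_\mathcal{A}$, and $o^*$ maximizes the full sum over that set. Therefore
\[
\sum_{j\in M\setminus S} b_j(o^*) \ \ge\ \sum_{j\in M\setminus S} b_j(o_{-i}) \ =\ b_i(o_{-i}) + \sum_{j\neq i} b_j(o_{-i}) \ \ge\ \sum_{j\neq i} b_j(o_{-i}).
\]
The last step uses $b_i(o_{-i})\ge 0$. Subtracting gives $r_i(\mathbf{b})\ge 0$.

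\textbf{Main subtlety.} The delicate point is that the comparison requires $o_{-i}$ to be feasible for the maximization defining $o^*$. This is exactly where condition~(i) matters: the candidate set does not change when $b_i$ is ignored, so $o_{-i}\in\Omega_\mathcal{A}$. Here the sums over $j\neq i$ are read as ranging over $M\setminus S$. The only other fact needed is non-negativity of bids.
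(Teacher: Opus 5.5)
Your proof is correct and follows essentially the same argument as the paper. It uses the same case split: for conflict-free bundles the refund equals a non-negative bid, and for $i\in M\setminus S$ the optimality of $o^*$ over $\Omega_\mathcal{A}$ gives $\sum_{j\in M\setminus S} b_j(o^*) \ge \sum_{j\in M\setminus S} b_j(o_{-i}) \ge \sum_{j\neq i} b_j(o_{-i})$. Your opening decomposition that splits off $b_i(o^*)$ is an unnecessary detour, but it does no harm.
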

\begin{proof}
    For any bundle $i \in S$, the refund equals their bid, which is non-negative by definition. 
    For any $i \in M\setminus S$, the refund is $r_i(\mathbf{b}) = \sum_{j\in M\setminus S} b_j(o^*) - \sum_{j\neq i} b_j(o_{-i})$. Since $o^*$ maximizes the total bid among all possible outcomes in $\Omega_\mathcal{A}$, we have $$\sum_{j\in M\setminus S} b_j(o^*) \geq \sum_{j\in M\setminus S} b_j(o_{-i}) \geq \sum_{j\neq i} b_j(o_{-i}).$$ 
    It follows that $r_i(\mathbf{b}) \geq 0$.
\end{proof}

\begin{corollary}[Individual Rationality of Searchers]\label{cor:searchersIR}
    Truthful searchers receive non-negative utility.
\end{corollary}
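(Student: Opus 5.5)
The plan is to compute the utility of a truthful searcher directly from the definition of searcher utility and then apply the preceding Claim. Under truthful bidding $b_i = v_i$, so whichever block $\omega$ is finally selected, whether $o^*\cup S$ in Case (4a) or $B^*\cup S$ in Case (4b), the utility is
\[
u_i^{\text{searcher}} = v_i(\omega) - b_i(\omega) + r_i = r_i .
\]
Thus the whole statement reduces to non-negativity of the refund the mechanism actually pays to bundle $i$.

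I will split into two cases according to whether $i$ is conflict-free. For $i\in S$, the refund in both Case (4a) and Case (4b) equals $b_i(\omega)$ for the final block $\omega$. The net payment is therefore zero, and the utility is $v_i(\omega)\ge 0$, since valuations are non-negative. Equivalently, it equals $r_i=b_i(\omega)\ge 0$.

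For $i\in M\setminus S$, the refund is the quantity $r_i(\mathbf{b})$ computed in step (2). That quantity is the same in Cases (4a) and (4b), as the paper emphasizes: it depends only on the default algorithm's blocks $o^*$ and $o_{-i}$, not on which block wins. The preceding Claim shows $r_i(\mathbf{b})\ge 0$, using that $o^*$ maximizes total bid over $\Omega_\mathcal{A}$ and that bids are non-negative. Hence the utility is $r_i(\mathbf{b})\ge 0$.

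The only subtle point is Case (4b), where the final block is the builder's block $B^*$ rather than $o^*$. One might worry that $v_i(B^*)-b_i(B^*)$ interacts with a refund computed relative to $o^*$. Truthfulness makes $v_i(B^*)-b_i(B^*)=0$ regardless of the block, so the mismatch is harmless. I do not expect any real obstacle, and the corollary should follow in a few lines from the Claim.
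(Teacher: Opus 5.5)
Your proposal is correct and matches the paper, which states this corollary as an immediate consequence of the preceding Claim. Under truthful bidding $b_i=v_i$, the utility $v_i(\omega)-b_i(\omega)+r_i$ reduces to the refund $r_i$, and the Claim shows this refund is non-negative both for $i\in S$ (where $r_i=b_i(\omega)$) and for $i\in M\setminus S$ (where $r_i(\mathbf{b})$ comes from the default algorithm in either Case (4a) or (4b)).
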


Now, we discuss the soundness of the refund design in \mechanism, showing that the total payment received by \mechanism is sufficient for refunding searchers and builders in step (4). 
\begin{theorem}\label{thm:proposerIR}
    The total refunds to searchers and builders do not exceed the total payment received by our mechanism.
\end{theorem}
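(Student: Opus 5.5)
We prove the statement by splitting into the two cases of step~(4) of \mechanism and, in each case, comparing the total amount collected by the mechanism with the total refunds paid to searchers and builders. The key inequality in both cases is
\[
\sum_{i\in M\setminus S} r_i(\mathbf{b}) \le \beta_0,
\]
that is, the VCG refunds computed by the default algorithm never exceed the default block's total bid. Payments to conflict-free bundles in $S$ cancel exactly: each $i\in S$ pays $b_i(\cdot)$ and receives $b_i(\cdot)$ back. So only bundles in $M\setminus S$ and the builders need to be accounted for.

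\textbf{Case (4a).} The mechanism collects $\beta_0+\sum_{i\in S}b_i(o^*)$ and pays out $\sum_{i\in M\setminus S}r_i+\sum_{i\in S}b_i(o^*)$. It therefore suffices to show $\sum_{i\in M\setminus S} r_i\le \beta_0$.

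\textbf{Case (4b).} The winning builder pays $\beta^*+\sum_{i\in S}b_i(B^*)$. The mechanism refunds $\beta^*-\max\{\beta_0,\beta'\}$ to the builder, $b_i(B^*)$ to each $i\in S$, and $r_i$ to each $i\in M\setminus S$. After cancellation, the condition becomes $\sum_{i\in M\setminus S} r_i\le \max\{\beta_0,\beta'\}$, which follows from the key inequality because $\beta_0\le\max\{\beta_0,\beta'\}$. The builder's refund is nonnegative since $\beta^*>\beta_0$ and $\beta^*\ge\beta'$; this is not needed for budget balance but is worth noting.

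\textbf{The key inequality.} This is the main obstacle. Writing $\beta_0=\sum_{j\in M\setminus S}b_j(o^*)$ and $r_i=\beta_0-\sum_{j\ne i}b_j(o_{-i})$, the inequality is equivalent to
\[
(|M\setminus S|-1)\,\beta_0\le \sum_{i\in M\setminus S}\sum_{j\neq i} b_j(o_{-i}).
\]
Since $o_{-i}$ maximizes $\sum_{j\ne i}b_j$ over $\Omega_\mathcal{A}$ and $o^*\in\Omega_\mathcal{A}$, we have $\sum_{j\ne i}b_j(o_{-i})\ge\sum_{j\ne i}b_j(o^*)=\beta_0-b_i(o^*)$. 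Summing over $i\in M\setminus S$ gives
\[
\sum_{i\in M\setminus S}\sum_{j\ne i}b_j(o_{-i}) \;\ge\; |M\setminus S|\,\beta_0-\beta_0,
\]
which is exactly the required bound. Equivalently, $r_i\le b_i(o^*)$ for each $i$: every searcher's net payment is nonnegative, and the refunds sum to at most $\beta_0$. This uses only Property~(ii)(2), namely that $o_{-i}$ is a maximizer over a set containing $o^*$. That membership holds because both blocks are drawn from the same bid-independent set $\Omega_\mathcal{A}$ (Property~(i)), so no further assumption is needed.
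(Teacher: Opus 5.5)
Your proof is correct and follows the paper's argument: you derive the per-bundle bound $r_i(\mathbf{b}) \le b_i(o^*)$ from the optimality of $o_{-i}$ over a set $\Omega_\mathcal{A}$ containing $o^*$, sum it to get $\sum_{i\in M\setminus S} r_i(\mathbf{b}) \le \beta_0$, and then do the same Case~(4a)/(4b) accounting as the paper. Your remark that Property~(i) is what guarantees $o^*$ lies in the set over which $o_{-i}$ is optimized (since $o_{-i}$ is computed with $b_i$ zeroed out) is a small precision that the paper leaves implicit.
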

\begin{proof}
    Fix any $i\in M\setminus S$. 
    Let $o_{-i} \in \arg \max_{o\in \Omega_\mathcal{A}} \sum_{j\neq i} b_j(o)$. By the definition of $o_{-i}$, $\sum_{j\neq i} b_j(o_{-i}) \geq \sum_{j\neq i} b_j(o^*)$. Thus, 
    \begin{align*}
        r_i(\mathbf{b}) &= \sum_{j\in M\setminus S} b_j(o^*) - \sum_{j\neq i} b_j(o_{-i}) \\
        &\leq \sum_{j\in M\setminus S} b_j(o^*) - \sum_{j\neq i} b_j(o^*) = b_i(o^*).
    \end{align*}        
    It follows that total refunds to the set $M\setminus S$ of searchers satisfy
    $$\sum_{i\in M\setminus S} r_i(\mathbf{b}) \leq \sum_{i\in M\setminus S} b_i(o^*) = \beta_0.$$

    If the default algorithm wins, builders receive no refund, so the total refund is $\sum_{i\in M\setminus S} r_i(\mathbf{b}) + \sum_{i\in S} b_i(o^*) \leq \beta_0 + \sum_{i\in S} b_i(o^*)$, as claimed. 
    Otherwise, when a builder wins, the total refunds to searchers and builders equal 
    $$\sum_{i\in M\setminus S} r_i(\mathbf{b}) + \sum_{i\in S} b_i(B^*) + \beta^* - \max\{\beta_0, \beta'\} \leq \beta^* + \sum_{i\in S} b_i(B^*),$$ 
    which is the payment from the winning builder to \mechanism. 
    
    This concludes the proof.
\end{proof}

Next, we discuss the incentives of builders and searchers. We first focus on the bidding strategy, assuming that searchers and builders operate independently, i.e., without integration.
\begin{theorem}[DSIC for builders]\label{theorem:builderDSIC} 
    \mechanism is DSIC for builders.
\end{theorem}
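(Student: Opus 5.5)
The plan is to reduce the builders' game to a second-price (Vickrey) auction in which the default block acts as a reserve bid $\beta_0$ that does not depend on any builder's report.

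Fix a builder $j$, fix the bundle set $M$, the bid profile $\mathbf{b}$, and all other builders' algorithms and bids. Builder $j$'s block $B_j$ is a permuted subset of $M\setminus S$. Its true private value for winning with $B_j$ is the profit it would collect, namely
\[
V_j \coloneqq \sum_{i\in M\setminus S} b_i(B_j).
\]
The builder receives $\sum_{i\in M} b_i(B_j)$ from searchers and pays the extra $\sum_{i\in S} b_i(B_j)$ to the mechanism, so the conflict-free part cancels. Truthful bidding therefore means $\beta_j = V_j$.

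First I would simplify $j$'s utility from the builder utility definition together with Case~(4b).
\begin{itemize}
\item If $j$ wins, it receives $\sum_{i\in M} b_i(B_j)$, pays $\beta_j + \sum_{i\in S} b_i(B_j)$, and is refunded $\beta_j - \max\{\beta_0,\beta'\}$. Its net utility is $V_j - \max\{\beta_0,\beta'\}$.
\item If $j$ loses, its utility is $0$.
\end{itemize}

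Second, I would observe that the threshold $p_j \coloneqq \max\{\beta_0, \max_{k\neq j}\beta_k\}$ is independent of $\beta_j$. Step~(2) and $\beta_0$ depend only on the default algorithm and the searchers' bids. The term $\beta'$, when $j$ is the winner, is exactly $\max_{k\ne j}\beta_k$. Builder $j$ wins iff $\beta_j > \beta_0$ and $\beta_j$ is the top builder bid, that is, essentially iff $\beta_j > p_j$. Ties need care: against $\beta_0$ the default wins ties, and among builders ties are broken arbitrarily. In every tie case, though, winning would yield $V_j - p_j = 0$ when truthful, so ties do not matter.

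Third, I would run the standard Vickrey case split. If $V_j > p_j$, truthful bidding wins and gives $V_j - p_j > 0$. Any deviation either still wins at the same price or loses and gets $0$. If $V_j \le p_j$, truthful bidding gives at most $0$. Any deviation that wins pays $p_j$ and gets $V_j - p_j \le 0$. So truthful bidding weakly dominates.

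The step I expect to be the main obstacle is the modeling point of what "truthful" means, and whether $j$ could profit by also changing its algorithm's block. Since the other terms are fixed, $j$'s utility is maximized by choosing the block maximizing $V_j$ and then bidding $V_j$. I would remark that this follows from the same argument, since $p_j$ is independent of $B_j$ as well.

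I would also check the integrated case mentioned in Builder-DSIC. There, the integrating bundle's bid contributes only to $B_j$'s value $V_j$, and $p_j$ is still unaffected, so the same argument goes through.
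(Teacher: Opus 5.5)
Your proposal is correct and takes the same route as the paper: the paper's proof simply notes that the builder stage is a second-price auction with reserve $\beta_0$, where the effective price is $\max\{\beta_0,\beta'\}$, and is therefore DSIC. Your extra details also check out: the conflict-free payments cancel so that a winner's utility is $V_j - \max\{\beta_0,\beta'\}$, the threshold $p_j$ does not depend on $j$'s bid or block, and ties are harmless.
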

\begin{proof}
    When deciding the winner, the mechanism selects the algorithm with the highest bid. 
    If a builder wins, the \emph{effective} payment is the maximum of the default block value ($\beta_0$) and the second-highest builder bid. This is precisely a second-price auction with a reserve $\beta_0$, which is DSIC.
\end{proof}

\begin{theorem}[DSIC for searchers in default-dominating scenario]\label{theorem:searcherDSIC}
    Given a bundle set where the default algorithm outperforms all builder-submitted algorithms for all bid functions, bidding truthfully is a dominant strategy for searchers. 
\end{theorem}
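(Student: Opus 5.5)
The plan is to reduce \autoref{theorem:searcherDSIC} to \autoref{theorem:variantDSIC}. The hypothesis says that for every bid profile the default block wins, meaning $\beta_0 \geq \beta^*$. So \mechanism always ends in Case~(4a), and we can analyze that case alone. In Case~(4a) the final block is $o^* \cup S$, builders pay and receive nothing, and each searcher's net payment is determined only by the default algorithm's outputs. So the mechanism restricted to this instance is exactly $\texttt{VCG}_\mathcal{A}$ run on $M\setminus S$, with the bundles in $S$ appended.

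The proof splits by the type of bundle.

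\textbf{Conflict-free bundles.} Take $i \in S$. Its payment $b_i(o^*)$ is fully refunded, so its utility is $v_i(o^*\cup S)$. Whether $i$ is conflict-free depends on its storage accesses, not on its bid, so $S$ itself does not depend on $b_i$. The block $o^*$ is computed on $M\setminus S$ only, so it does not depend on $b_i$ either. Since $i$ does not interact with the rest of the block, its value is fixed whatever it reports. Every bid therefore gives the same utility, and truthful bidding is weakly dominant.

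\textbf{Non-conflict-free bundles.} Take $i \in M\setminus S$ and fix $\mathbf{b}_{-i}$. By \autoref{property}(i), the candidate set $\Omega_\mathcal{A}(M\setminus S)$ does not depend on $b_i$. Its utility is
\[
v_i(o^*) - b_i(o^*) + r_i = \Big[v_i(o^*) + \sum_{j\in M\setminus S,\, j\ne i} b_j(o^*)\Big] - \max_{o\in\Omega_\mathcal{A}}\sum_{j\ne i} b_j(o).
\]
The second term does not depend on $b_i$. The bundles in $S$ add the same value to every candidate, so they do not affect which block is selected.

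The key step is to show that $i$'s report can influence only which $o^*$ gets selected. By the hypothesis, this selection is never taken over by a builder block. Reporting $b_i = v_i$ makes $\mathcal{A}$ maximize exactly the first bracket over $\Omega_\mathcal{A}$, so no other report can do better. This is the argument of \autoref{theorem:variantDSIC} applied verbatim.

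The main obstacle is being precise about the quantifier ``for all bid functions'' in the hypothesis. A deviation by $i$ changes $\beta_0$, and it may also change the builders' outputs $\beta_j$. Without the hypothesis, such a deviation could flip the outcome into Case~(4b). The hypothesis is exactly what rules out every such flip.

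I would state this explicitly: for every $b_i'$, the profile $(b_i',\mathbf{b}_{-i})$ still satisfies $\beta_0 \geq \beta^*$, so the outcome remains in Case~(4a). I would also note that the refund $r_i$ is computed only from $\mathcal{A}$'s outputs, so the builder algorithms have no effect on utilities once Case~(4a) is fixed.
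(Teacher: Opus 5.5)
Your proposal is correct and follows the paper's proof. You split bundles into the conflict-free ones in $S$, which have zero net payment and a bid-independent utility, and those in $M\setminus S$, where the default-dominating hypothesis forces Case~(4a) under every deviation; there the mechanism coincides with $\texttt{VCG}_\mathcal{A}$, so \autoref{theorem:variantDSIC} applies. Your explicit checks that $S$ and $o_{-i}$ do not depend on $b_i$, and that the hypothesis must cover every deviating profile $(b_i',\mathbf{b}_{-i})$, just make precise what the paper leaves implicit.
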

\begin{proof}
    For any conflict-free bundle $i \in S$,~\mechanism always appends it to the winning block and refunds its bid. Hence, $i$ has zero net payment and fixed utility, regardless of its bid. 
    For bundles in $M \setminus S$, if the default algorithm $\mathcal{A}$ always wins for all $M$ and all bid profiles $\mathbf{b}$, the mechanism always enters Case (4a) and coincides with the VCG-based mechanism $\texttt{VCG}_\mathcal{A}$ (see~\autoref{subsec:practical}). Therefore, the theorem naturally holds for bundles in $M \setminus S$ following~\autoref{theorem:variantDSIC}.
\end{proof}

The default-dominating scenario includes two cases: the default algorithm is the only one running in the TEE (e.g, in the initial stages after releasing our system), or the default algorithm performs better than all other builder-submitted algorithms (e.g., due to the community's contribution). 
In this scenario,~\autoref{theorem:builderDSIC} and~\autoref{theorem:searcherDSIC} tell us that both builders and searchers should bid truthfully. 

For the general scenario where a builder-submitted algorithm may win, there is a chance that the mechanism enters Case (4b). However, builder-submitted algorithms effectively behave as black boxes, making it difficult to obtain universal guarantees. 
In particular, a builder's algorithm might not satisfy~\autoref{property}, and therefore \mechanism loses DSIC for searchers. This highlights a fundamental tradeoff between ensuring mechanisms that are always-DSIC for searchers and harnessing the sophistication of builders.

Nevertheless, for searchers with conflict-free bundles, DSIC holds in all scenarios, even when the default algorithm is not dominant.  Formally, we show the following theorem. 

\begin{theorem}
    Bidding truthfully is a weakly dominant strategy for searchers with conflict-free bundles in all scenarios.
\end{theorem}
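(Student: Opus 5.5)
The plan is to show that a conflict-free searcher's utility does not depend on its own bid at all. Then every bid, including the truthful one $b_i = v_i$, is weakly dominant. Fix a conflict-free bundle $i \in S$, an arbitrary profile $\mathbf{b}_{-i}$ of the other bundles' bid functions, and arbitrary builder-submitted algorithms, whatever their bidding strategies. I would compute $u_i^{\text{searcher}}$ in each of Case (4a) and Case (4b) of \autoref{fig:mechanism} and show that it equals $v_i$ evaluated on a block whose identity and relevant execution state do not depend on $b_i$.

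\textbf{Step 1: $b_i$ cannot affect the winning block.} By step~(0), $i$ is withheld from the input. The default algorithm in steps (1)--(2) and every builder-submitted algorithm in step~(3) run only on $M\setminus S$ with bids $\mathbf{b}_{-S}$. Hence $o^*$, $\beta_0$, $\{o_{-j}\}$, $\{r_j\}_{j\in M\setminus S}$, the candidate blocks $B_j$, the bids $\beta_j$, and therefore the case selection and the identity of the winning block are all independent of $b_i$.

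One subtlety needs checking. Whether $i$ is classified as conflict-free depends only on its read/write storage footprint, not on its bid function. So changing $b_i$ cannot move $i$ out of $S$.

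The conflict-free set $S$ is appended to the winner in the same way in both cases. Because $i$'s execution neither affects nor depends on any other bundle, its execution state $s_i$ in the final block $o^*\cup S$ or $B^*\cup S$ is the same as its state in isolation. Thus $v_i(\omega)$ for the final block $\omega$ is a constant $\bar v_i$ that does not depend on $b_i$.

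\textbf{Step 2: $i$'s net payment is zero.} In Case (4a), bundle $i$ pays $b_i(o^*)$ and is refunded $b_i(o^*)$. In Case (4b), it pays $b_i(B^*)$ to the builder and is refunded $b_i(B^*)$ by the mechanism. In either case $u_i^{\text{searcher}} = \bar v_i - b_i(\omega) + b_i(\omega) = \bar v_i$.

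This holds for every $b_i$, so no deviation changes $i$'s utility. Truthful bidding is therefore weakly dominant in all scenarios, whether or not the default algorithm dominates.

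\textbf{Main obstacle.} The delicate step is justifying that the rest of the mechanism is unaffected by $b_i$ even in Case (4b), where builders are black boxes. Here I would rely on the fact that builders' algorithms never receive $i$ as input, since they are run on $M\setminus S$. The winning builder's extra payment $\sum_{i\in S} b_i(B^*)$ is exactly offset, so their bid $\beta^*$ and the comparison with $\max\{\beta_0,\beta'\}$ cannot react to $b_i$.

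The same offsetting matters because builder behavior does not feed back into $i$'s utility anyway. Even if a builder's profit changes with $b_i$, that change only redistributes money between the builder and the mechanism and leaves $i$'s own payoff untouched.
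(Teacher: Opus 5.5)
Your proposal is correct and follows the paper's own argument. The paper likewise observes that a conflict-free bundle is withheld from every algorithm, always appended to the winning block, and refunded exactly its bid, so neither the allocation nor the net payment depends on $b_i$ and the utility is constant. Your extra checks, that membership in $S$ is bid-independent and that builders' bids cannot react to $b_i$, make explicit what the paper leaves implicit.
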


This argument follows the same reasoning as in the proof of~\autoref{theorem:searcherDSIC}. 
Intuitively, for a conflict-free bundle, neither the allocation nor the net payment depends on its bid. 
Hence, truthful bidding weakly dominates any deviation for conflict-free searchers, even in builder-winning cases.

Next, we turn to the integration decision faced by searchers. 
In particular, we focus on the searcher with a conflict-free bundle, which represents one of the most concerning practical sources of integration.

Recall that while \mechanism ensures that all algorithms ``see'' the same set of bundles, an integrating bundle can still behave differently under different algorithms by employing a state-dependent execution trick (see~\autoref{remark:integration}). 

A crucial observation is that such a bundle can only distinguish the integrated algorithm from all others; it cannot further differentiate among the remaining non-integrated algorithms (e.g., it cannot tell the default algorithm apart from those submitted by other builders.\footnote{To prevent the searcher from distinguishing the default algorithm by observing the coinbase address, the default algorithm uses a one-time address as the coinbase address for block building.}) 
As a result, the searcher's integration strategy space effectively reduces to two options: grant \emph{equal} access to all algorithms, or grant \emph{exclusive} access to the integrated algorithm. 

Mathematically, the searcher's two possible behaviors can be represented by the two following strategies: 
\begin{itemize}[leftmargin=*]
    \item \emph{Participate:} apply no execution trick and submit the bundle to~\name, allowing all algorithms to access it equally; 
    \item \emph{Integrate:} withhold the bundle from submission and privately send it to the integrated builder.
\end{itemize}

These two strategies capture whether the searcher restricts access to the integrated building algorithm or makes the bundle universally available. 
It is important to note that this equivalence holds only in a \emph{mathematical} sense. In practice, \name does not permit any algorithm to introduce new bundles. Thus, a bundle privately sent to the integrated builder cannot actually be included in the block, whereas the state-dependent execution trick described in~\autoref{remark:integration} effectively circumvents this restriction, enabling the integrated algorithm to include the bundle in its proposed block. 

Hence, before setting their bidding function, each searcher faces an additional \emph{meta-decision}: whether to integrate or to participate. 
Together, these choices form a two-step decision process, illustrated in~\autoref{fig:searcher-tree}. 

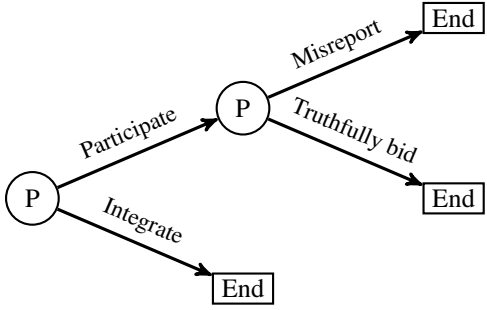
\begin{figure}[t]
  \centering
  \begin{tikzpicture}[
      >=stealth',
      grow=right,
      level distance=28mm,
      level 1/.style={sibling distance=24mm},
      level 2/.style={sibling distance=24mm},
      player/.style={circle, draw, thick, minimum size=7mm, inner sep=0pt},
      terminal/.style={rectangle, draw, thick, inner sep=1pt, minimum width=8mm, minimum height=4mm},
      edge from parent/.style={->, very thick, draw=black},
      elabel/.style={midway, above, sloped, font=\small}
    ]
    \node[player] (root) {P}
      child { node[terminal] {End}
              edge from parent node[elabel] {Integrate} }
      child { node[player] (p2) {P}
        child { node[terminal] {End}
                edge from parent node[elabel] {Truthfully bid} }
        child { node[terminal] {End}
                edge from parent node[elabel] {Misreport} }
        edge from parent node[elabel] {Participate} };
  \end{tikzpicture}

  \caption{Searcher's two-step decision: first, choose whether to integrate or participate; conditional on participation, choose between truthful bidding or misreporting. Each path leads to a terminal outcome.}
  \label{fig:searcher-tree}
\end{figure}

Regarding the meta-game, we give the following theorem.

\begin{theorem}
    Let $i\in M$ be a conflict-free bundle and let $j\in[n]$ be any (potentially integrating) builder. 
    For any bids of all builders, the strategy of searcher $i$, participate and bid truthfully, weakly dominates all alternative strategies with respect to the joint utility of~$(i, j)$.
\end{theorem}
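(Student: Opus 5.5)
The plan is to fix all builder bids and the other searchers' bid functions, and compute the joint utility $U=u_i^{\text{searcher}}+u_j^{\text{builder}}$ under each leaf of \autoref{fig:searcher-tree}. The leaves are \emph{Participate + truthful}, \emph{Participate + misreport}, and \emph{Integrate*}. We then show that the first leaf is at least as large as each of the others.

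\textbf{Step 1: Participate + truthful.} Since $i\in S$, step (0) withholds $i$, and the outcome on $M\setminus S$ does not depend on $b_i$. This covers the default block, the refunds $r_k$, and the builder blocks and bids. The winner is therefore determined independently of $i$, and $i$ is appended to the winning block $W$. The searcher pays $b_i(W)$ and gets it refunded, so $u_i=v_i(W)=v_i$. Here $v_i$ is a constant, since a conflict-free bundle's execution is order-independent.

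On the builder side, $j$ collects $b_i(W)$ from $i$ in Case (4b) but also pays the extra $b_i(B^*)$ to the mechanism. The terms cancel, so $u_j$ equals exactly what it would be if $i$ were absent. Call this value $u_j^{0}$.

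Thus $U_{\text{truth}}=v_i+u_j^{0}$.

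\textbf{Step 2: Participate + misreport.} By the same cancellation, every term involving $b_i$ drops out of both $u_i$ and $u_j$. Hence $U=U_{\text{truth}}$, which gives weak dominance; this is essentially the preceding theorem on conflict-free searchers.

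\textbf{Step 3: Integrate.} Now $i$ is absent from $M$ for everyone except $j$. Builder $j$'s algorithm may include $i$, and $j$ can raise its bid using $i$'s payment. Write $W'$ for the winner in this scenario. If $j$ does not win, $i$ is not included at all, so $u_i=0$ and $u_j=0$, giving $U=0\le v_i+u_j^{0}$, since $u_j^{0}\ge 0$ by builder IR (\autoref{theorem:builderDSIC}).

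If $j$ wins, $i$ pays $b_i$ to $j$, so the pair's joint payoff from $i$ is $v_i$ regardless of the internal transfer. The remaining part is $j$'s payoff from bundles in $M\setminus S$ plus the others in $S$. This equals $j$'s gross value from those bundles minus the price $\max\{\beta_0,\beta'\}$, where $\beta_0,\beta'$ are the same as without $i$, because $i$ is invisible to everyone else.

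The key comparison is between two facts. Under integration, $j$'s block value on $M\setminus S$ is at most its best achievable value $V_j$, and it faces the same threshold. Under participation, $j$ could have submitted the same block without $i$ and obtained $u_j^{0}\ge V_j-\max\{\beta_0,\beta'\}$ if that is positive, and $0$ otherwise. For this step we assume $j$ bids truthfully, or more generally take the same bid minus $i$'s contribution. So integration yields $U\le v_i+\max\{0,V_j-\max\{\beta_0,\beta'\}\}=v_i+u_j^{0}$.

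\textbf{Main obstacle.} The hardest part is Step 3: making the ``for any bids of all builders'' quantifier precise. I would handle it by coupling $j$'s bid under integration with the bid $\beta_j-b_i$ under participation, and checking that both the win condition and the net payoff match. Combining the cases, $U_{\text{truth}}\ge U$ for every alternative, which establishes weak dominance.
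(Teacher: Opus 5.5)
Your main argument is the paper's. The pair's utility splits into two parts. The first is $v_i$, because $i\in S$ is always appended and fully refunded. The second is $j$'s second-price surplus $V_j-\hat\beta$ when $j$ wins, where $\hat\beta=\max\{\beta_k : k\in[0:n],k\neq j\}$ does not depend on $i$. Integration can only drop $i$ (utility $0$) or let $j$ win at price $\hat\beta$, so $U\le v_i+\max\{0,V_j-\hat\beta\}$. You organize by deviation type rather than by whether $j$ wins under the desired behavior (the paper's Cases 1 and 2), and your single bound covers both cases.

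\textbf{The gap.} That bound equals your $U_{\text{truth}}$ only when $j$ bids truthfully in the benchmark. The generalization you propose for the ``any bids'' quantifier does not work. You couple $j$'s integration bid $\beta_j$ with the participation bid $\beta_j-b_i$, but the win conditions are then $\beta_j\ge\hat\beta$ versus $\beta_j-b_i\ge\hat\beta$. These disagree whenever $\hat\beta\le\beta_j<\hat\beta+b_i$. Concretely, take $\hat\beta=8$, $V_j=10$, $b_i=v_i=5$, $\beta_j=9$:
\begin{itemize}
\item integrating, $j$ wins and the pair gets $v_i+V_j-\hat\beta=7$;
\item participating with bid $4$, $j$ loses and the pair gets $v_i=5$.
\end{itemize}
So the main-obstacle step, as you plan it, would fail.

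\textbf{The fix.} Prove the quantifier the paper actually proves. The other builders' bids and the default block are arbitrary but fixed. The benchmark is non-integration with both $i$ and $j$ truthful. Deviations may change $j$'s bid arbitrarily. Under a literal reading where $\beta_j$ is identical in both scenarios, the comparison is trivial: $\hat\beta$ does not depend on $i$, so $j$ wins in both or neither.

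\textbf{Two further repairs under that benchmark.} First, Step 2 must also cover the joint deviation where $i$ participates with any bid and $j$ misreports. Since $b_i$ cancels, the pair's utility is $v_i$ plus $j$'s surplus in a second-price auction with reserve $\beta_0$, which truthful bidding maximizes (\autoref{theorem:builderDSIC}). This is the paper's step ``if the deviation causes $j$ to win, the utility is $v_i+V_j-\hat\beta\le v_i$.'' Second, your appeal to ``$u_j^0\ge 0$ by builder IR'' is valid only for truthful $j$; an overbidding $j$ can have $u_j^0<0$.
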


Before the proof, it is worth pointing out that integration couples the bundle $i$'s inclusion with the integrated builder's algorithm, so the integrating searcher $i$ and builder $j$ should be viewed as a single coalition. 
Although the builder itself faces a second-price auction, integration may incentivize her to misreport if that could increase their joint utility. 
Therefore, our analysis below also considers builder $j$'s strategic behavior (i.e., reporting arbitrarily), and examines whether any deviation involving integration and/or misreporting can improve the pair's joint utility relative to their desired behavior (i.e., non-integration and both truthful).

\begin{proof}
    Fix a set $M$ of bundles, a set $S \subseteq M$ of conflict-free bundles, and $n$ builder-submitted algorithms. Let $i \in S$ be an arbitrary conflict-free bundle, and let $j \in [n]$ be a builder who may potentially integrate with $i$.

    Suppose that searcher $i$ participates (i.e., does not integrate) and bids truthfully, while builder $j$ also bids truthfully, which we refer to as their \emph{desired behavior}. 
    Given $M$, $(\mathbf{v}_i, \mathbf{b}_{-i})$, and $n$ building algorithms, the mechanism runs all algorithms on $M\setminus S$ and determines the winner according to their bids $\set{\beta_0, \beta_1, \cdots, \beta_n}$. 
    Let $V_j(M\setminus S)$ denote the value of $j$'s generated block, and let $\hat{\beta} = \max \set{\beta_k \mid k\in [0:n], k \neq j }$ be the highest competing value from all other algorithms. 
    Note that both $V_j(M\setminus S)$ and $\hat{\beta}$ are independent of searcher $i$'s strategy (and builder $j$'s bid), since bundle $i$ is excluded from the input of all algorithms.

    Next, we consider two exhaustive cases and show that in both cases, any possible deviation by the searcher $i$ and the integrated builder $j$ yields weakly lower total utility compared to their desired behavior. 

    \paragraph{Case 1: $j$ is the winner.} 
    This corresponds to $V_j(M\setminus S) \geq \hat{\beta}$. 
    In this case, the joint utility of $i$ and $j$ under their desired behavior (i.e., non-integration and both truthful) is:
    $$U_1 = v_i + V_j(M\setminus S) - \hat{\beta},$$
    where $v_i$ is $i$'s true valuation. 

    Consider all possible deviations by searcher $i$ and builder $j$. As long as builder $j$ still wins, bundle $i$ will be included in the final block regardless of whether $i$ integrates or not, and their total utility remains $U_1$. On the contrary, if builder $j$ loses,
    \begin{itemize}
        \item If searcher $i$ integrates, the bundle is excluded from inclusion and their joint utility becomes $0 < U_1$.
        \item If searcher $i$ participates, the bundle remains included by the winner, yielding total utility $v_i \leq U_1$.
    \end{itemize}

    \paragraph{Case 2: $j$ is not the winner.} 
    This corresponds to $V_j(M\setminus S) \leq \hat{\beta}$.
    In this case, bundle $i$ is still included in the winning block and receives the full refund $v_i$. The joint utility of searcher $i$ and builder $j$ under their desired behavior is thus: 
    $$U_2 = \underbrace{v_i}_{\text{$i$'s true valuation}} 
    - \underbrace{v_i}_{\text{$i$'s payment}} 
    +  \underbrace{v_i}_{\text{refund}} 
    = v_i.$$
    
    We again consider all possible deviations. If the deviation causes builder $j$ to win, regardless of searcher $i$'s strategy, their total utility becomes $v_i + V_j(M\setminus S) - \hat{\beta} \leq v_i = U_2$. On the contrary, if builder $j$ still loses,
    \begin{itemize}
        \item If searcher $i$ integrates, their joint utility will be $0 < U_2$.
        \item If searcher $i$ participates, the bundle remains included and the joint utility stays $v_i = U_2$.
    \end{itemize}

    In both cases, every deviation yields the same or strictly lower utility. Hence, the searcher's strategy pair, \emph{(participate, bid truthfully)}, weakly dominates all alternatives with respect to the joint utility of $i$ and $j$. 
    This concludes the proof. 
\end{proof}

Intuitively, under truthful participation, the mechanism already refunds the conflict-free bundle its full value. By contrast, integration can only reduce the bundle's chance of inclusion or cause the integrated builder to win while paying more than the true gain, thus lowering their joint utility. Hence, any deviation through integration or misreporting cannot improve the coalition's total utility relative to truthful participation.

%% file: sections/5-implementation.tex
\section{Design and Evaluation of Default Algorithm}
\label{sec:implementation}

To implement \name, we need to design a default algorithm that satisfies the required ~\autoref{property} and achieve reasonable performance, i.e., outperforming other builder-submitted algorithms.
In this section, we present the design and evaluation of a default algorithm drawing on the empirical insights from real-world MEV bundles.

Note that we omit the implementation and evaluation of an end-to-end system because its performance profile will be similar to existing systems such as BuilderNet.

\subsection{Design Principle}
\label{sec:empirical-insights}

It is easy to design block-building algorithms for transactions that are non-conflicting, i.e., the execution of one does not affect the bid of another.
In this case, transactions can be included in an order-independent manner, and a greedy algorithm works well. 
Real-world transactions do conflict (particularly common when multiple searchers compete for the same MEV opportunities). 
In this case, block building reduces to exhausting all possible permutations of all subsets, in the most general case.

Our idea is to combine the two modes: we partition transactions into groups of conflicting transactions; within each group, we select the optimal subset exhaustively. 
The block is then formed by combining these subsets.
As we now show, this approach is practical because the conflicts are not common in real-world transactions, and even when conflicts do arise, most sets of conflicting transactions are small enough for exhaustive search.

\subsection{Empirical Validation}
\label{sec:empirical-validation}

\parhead{Conflicts of transactions}
Following the analysis of transaction conflicts by~\cite{miller2024parallel}, such conflicts can be fundamentally attributed to the storage state. They can be categorized into two types: conflicts over smart contract storage slots and conflicts over account balances. Other sources of conflict, such as contract creation and destruction, are less common in practice.

We therefore detect a conflict if two transactions either write to the same storage (a contract storage slot or an account balance), or if one reads from a storage slot while the other writes to it.
This definition naturally captures conflicts arising from MEV opportunities. For instance, in the case of arbitrage, two arbitrage transactions targeting a Uniswap V2 pool will both update the \texttt{reserve0} and \texttt{reserve1} storage slots of the pair contract, resulting in a write-write conflict.

\parhead{Dataset}
We randomly sampled 10,000 Ethereum blocks from February 2025 to curate the dataset presented below. For each block, we collected publicly available transactions from Mempool Dumpster~\cite{flashbots2025mempoolDumpster} and private bundles submitted to Flashbots during the corresponding time window. The resulting dataset consists of $2.9\,\text{M}$ public transactions and $56.0\,\text{M}$ private bundles, amounting to $36.6\,\text{GB}$ of data.

\parhead{Empirical study}
We extend rbuilder~\cite{flashbots2025rbuilder} to analyze transaction conflicts within a given block. Specifically, for each block, we first simulate the execution of each order (where an order refers to either a single transaction or a bundle) and record the storage accessed during execution. We group transactions and bundles according to their storage conflicts: two transactions are placed in the same group if they conflict on storage. 
This allows us to measure both the number of conflict groups per block, denoted \numCG, and the number of orders within a conflict group, denoted \sizeCG.

The corresponding pseudocode of the grouping process is provided in~\autoref{alg:default} (\texttt{GetConflictGroups}).
We construct a storage-conflict graph: each order is represented as a node, and an edge connects two orders if they access a common storage slot during execution (L\ref{line:graphbegin}-\ref{line:graphend}).
The connected components of this graph define the conflict groups.
To efficiently identify conflicts, we maintain an index map $I$ from each storage slot to the set of orders that access it (L\ref{line:mapdef}, L\ref{line:mapupdate}), which allows us to locate conflicting orders without pairwise comparison.
With this index and a standard connected-component procedure (L\ref{line:connected}), we can obtain the conflict groups with low overhead.

We examine how frequently large conflict groups show up when building blocks (i.e., groups with at least 8 orders, where permutation becomes less computationally efficient in our experiments). As shown in~\autoref{fig:large-groups}, in most blocks the number of large conflict groups is also relatively small: over 50\% of the blocks contain only one large conflict group, and in the majority of cases the number is fewer than four.
This finding is also consistent with practice, since most blocks contain only limited MEV opportunities.

\begin{figure}[th]
    \centering
    \includegraphics[width=\linewidth]{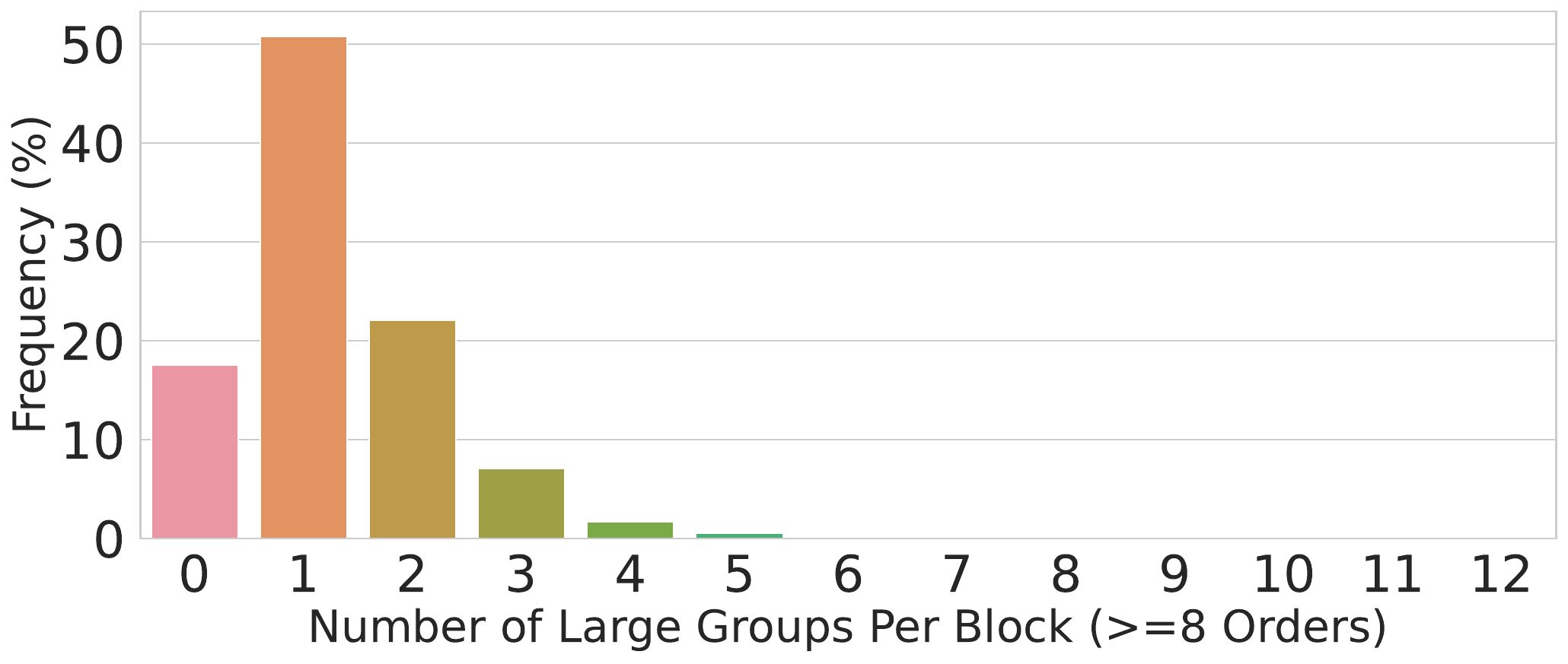}
    \caption{Frequency of large CGs where $\sizeCG \geq 8$.}
    \label{fig:large-groups}
\end{figure}

\parhead{Insights}
We can summarize two insights from the empirical study above.
First, for most conflict groups, we can enumerate all possible subsets of the orders within the group to identify the subset that provides the highest value. This is computationally feasible because the number of orders in each group is relatively small.
Second, for the remaining conflict groups, we can reduce complexity by exploiting their structural properties.
For example, if the conflicts arise purely from ERC-20 transfers, the ordering does not affect the outcome, so any ordering suffices. If the conflicts involve competition over an MEV opportunity, we only need to select the single most profitable transaction, since such an opportunity can be captured by only one transaction.
We incorporate these structural optimizations into our algorithm to reduce complexity in practice.

\subsection{A Concrete Proposal}
\label{sec:alg-evaluation}

\newcommand{\ncutoff}{k_\text{cutoff}}

\begin{algorithm}[htbp]
\small
\caption{Our Default Algorithm in \mechanism}
\label{alg:default}
\KwIn{Available orders $M$ and their bids $\mathbf{b}$}
\KwOut{Blocks $o^*$ and $\{o_{-i}\mid i\in M\}$}

\SetKwProg{Fn}{Function}{:}{}
\SetKwFunction{Access}{Access}
\SetKwFunction{ConnectedComponents}{ConnectedComponents}
\SetKwFunction{GetConflictGroups}{GetConflictGroups}
\SetKwFunction{Conflict}{Conflict}
\SetKwFunction{Bid}{Bid}
\SetKwFunction{Sample}{Select}
\SetKwFunction{Merge}{Merge}
\SetKwFunction{IsFeasible}{IsFeasible}
\SetKwFunction{AlgoFor}{AlgFor}

\BlankLine
\Fn{GetConflictGroups($\mathcal{O}$)}{
    $G \gets (\mathcal{V}\!\leftarrow\!\emptyset,\ \mathcal{E}\!\leftarrow\!\emptyset)$; \tcp{initialize graph}
    $\mathcal{I} \gets \emptyset$; \tcp{map: storage slot $\to$ set of orders} \label{line:mapdef}
    \For{$o \in \mathcal{O}$}{ \label{line:graphbegin}
        add $o$ to $\mathcal{V}$\;
        \For{$k \in \Access(o)$}{
            \For{$u \in \mathcal{I}[k]$}{ add edge $(o,u)$ to $\mathcal{E}$\; } \label{line:graphend}
            insert $o$ into $\mathcal{I}[k]$\; \label{line:mapupdate}
        }
    }
     $\{\mathsf{CG}_1,\dots,\mathsf{CG}_m\} \gets \ConnectedComponents(G)$; \label{line:connected}
\Return{$\{\mathsf{CG}_1,\dots,\mathsf{CG}_m\}$};
}

\BlankLine
\Fn{BlockBuilding($M$, $\mathbf{b}$)}{
    $\{\mathsf{CG}_1,\dots,\mathsf{CG}_m\} \gets \GetConflictGroups(\mathcal{O})$\;
    $\mathcal{Q} \gets [\,\mathsf{CG}_1,\dots,\mathsf{CG}_m\,]$\; %
    $\mathcal{B} \gets [\,]$\;
    
    \While{$|\mathcal{Q}| > 0$}{ \label{line:begin}
      $G \gets$ pop from $\mathcal{Q}$\;
    
      \uIf{$|CG| < \ncutoff$}{
        $o^\star \gets \arg\max_{o \in \Omega_{CG} } \Bid(o)$\; \label{line:best}
      }
      \uElseIf{\IsFeasible$(CG)$}{ \label{line:is-feasible}
        $o^\star \gets \AlgoFor(CG)$; \label{line:algfor}
        }
      \Else{
        $CG' \gets \Sample(CG,\ncutoff-1)$; \label{line:select}
        $o^\star \gets \arg\max_{o \in \Omega_{CG'} } \Bid(o)$\;
        }
      $\mathcal{B} \gets \mathcal{B} \,\|\,  o^\star$\; \label{line:end}
    }
\BlankLine
\Return{$\mathcal{B}$};
}

$o^* \gets BlockBuilding(M, \mathbf{b})$ \;
$o' \gets [\,]$\;
\ForAll{$i \in M$}{
   $o_{-i} \gets BlockBuilding(M, \left(\mathbf{b}_{-i}, b_i=0 \right))$\;%
  append $o_{-i}$ to $o'$\;
}

\Return{$o^*, o'$};

\BlankLine
\textbf{Auxiliary definitions:}\\
$\Omega_{CG}$: the set of all permutations by the orders in $CG$\;
\Access($o$): the storage slots accessed by $o$\;
\ConnectedComponents($G$): return all connected components in the graph $G$\;
\Bid{$o$}: the total bid of the orders in $o$\;
\IsFeasible{$CG$}: true if the complexity can be simplified\;
\AlgoFor{$CG$}: corresponding simplified solution\;
\Sample{$CG, k$}: select $k$ orders from $CG$.
\end{algorithm}

\autoref{alg:default} presents our concrete proposal, following the above design principles. 
It first partitions these orders into conflict groups, where two orders belong to the same group if they conflict with each other.
Each conflict group is then resolved independently (L\ref{line:begin}-\ref{line:end}).
For groups of less than $\ncutoff$ transactions, the algorithm enumerates all subsets and selects the subset with the highest value (L\ref{line:best}).
We choose $\ncutoff$ such that exhaustive permutation of $\ncutoff$ transactions remains fast enough.
For groups larger than $\ncutoff$, exhaustive search becomes infeasible due to the larger search space, so the algorithm applies alternative resolution strategies. 

For large conflict groups, we first check whether they fall into several special cases (L\ref{line:is-feasible}-\ref{line:algfor}).
E.g., if the group consists of bundles all involving the same transaction (e.g., every sandwich attack targets the same victim, or every backrunning arbitrage targets the same opportunity creator), only one bundle can be included. In this case, the complexity reduces to linear time, since we only need to compare all candidate bundles and select the best one.
If all transactions in the group are directed to the same contract address, their relative order is usually irrelevant. We therefore simply apply a random ordering using an internal seed.
As a last resort, if the group does not match any simplified case, we deterministically order the transactions by their hashes using an internal seed and select the top $\ncutoff-1$ (L\ref{line:select}) for exhaustive permutation.

At the end, the results from all conflict groups are merged to form the set of orders that will be included in the block.

\parhead{Default algorithm properties}
The default algorithm satisfies two conditions of~\autoref{property} required in~\autoref{subsec:practical}.
First, the feasible set of blocks considered by our default algorithm is determined independently of the bids. The algorithm begins by partitioning orders based on their storage conflicts, which does not depend on the bids. Each conflict group is then handled separately in three cases: (1) if the group size is less than $\ncutoff$, \textit{all} possible permutations of orders in the conflict group are treated as the feasible set; (2) if a conflict group is large but falls into a simplified structure, \textit{all} corresponding blocks are considered; and (3) otherwise, a subset is deterministically selected without reference to bids. Thus, it is clear that the feasible set of blocks under our default algorithm is independent of the bids.

Second, the complexity of conflict groups remains manageable. Our default algorithm partitions transactions into conflict groups and handles each group as follows: (1) if the group size is relatively small, we can enumerate all permutations within a short time; (2) if the group size is large but has structural properties that allow simplification, we can identify the optimal order in linear time; (3) if no ordering is required, the process introduces no additional complexity; and (4) if none of these conditions hold, we deterministically select seven orders from the group so that permutation remains computationally feasible.
Note that in the final step, merging all conflict groups is linear since no ordering needs to be resolved. Hence, the overall process is computationally feasible.

\subsection{Block Value Evaluation}

We implement \autoref{alg:default} (\texttt{BlockBuilding}) in about 900 lines of Rust code and 83 lines of Python code. Our implementation is built on top of rbuilder, which we also use to analyze transaction conflicts. We set $\ncutoff = 8$ based on multiple experiments, which show that the runtime increases significantly when $\ncutoff > 8$.\done%

We compare the block value of output by \autoref{alg:default} against those from algorithms implemented in rbuilder~\cite{flashbots2025rbuilder}, 
including two greedy block-building algorithms and the parallel block-building algorithm.
For this evaluation, we select 10{,}000 Ethereum blocks that are different from those covered in our empirical study, to provide an independent benchmark.
For each selected block, we initialize an independent simulation environment that replays the block-building context at that height, and use the mempool transactions and private bundles (sourced as described in~\autoref{sec:empirical-validation}) available at that block height as inputs to the block-building algorithms.

For each block, we compute the block value achieved by different block-building algorithms. We then check whether the block constructed by our default algorithm is the optimal one among them. We find that in \betterratio of the cases, the default algorithm produces the optimal block.

Across all non-optimal cases, the median absolute difference between the default algorithm and the best algorithm is 0.0033 ETH, while the median relative difference is 8.6\%. Although this gap is small, it is still important to understand why \autoref{alg:default} does not win and how it can be improved.

\begin{figure}
    \centering
    \includegraphics[width=\linewidth]{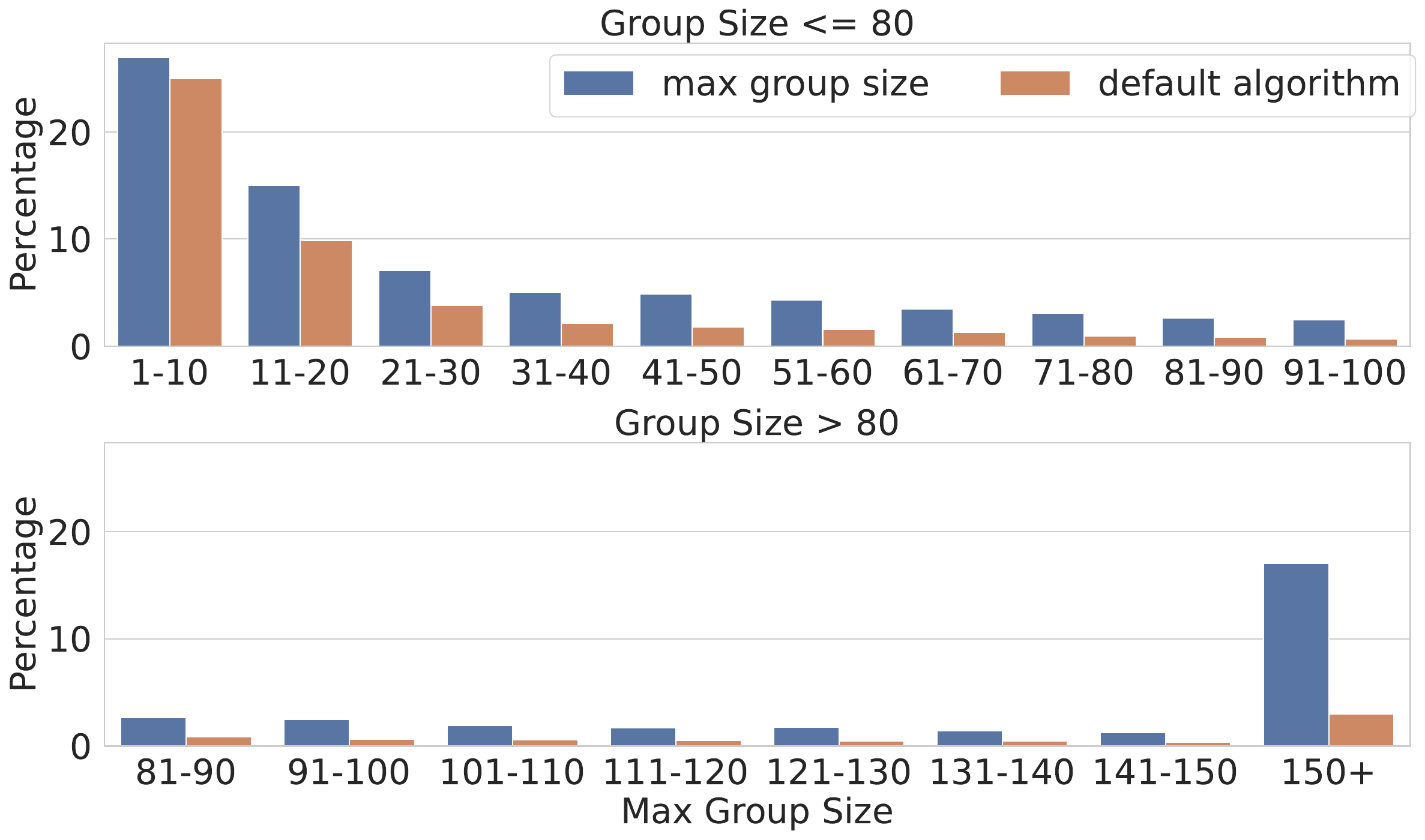}
    \caption{Distribution of the maximum group size in each block, and the fraction of all blocks where the default algorithm is the best in the block with that maximum group size.}
    \label{fig:group-analysis}
\end{figure}

Our analysis shows that the main reason lies in the handling of large conflict groups. As shown in~\autoref{fig:group-analysis}, when the maximum group size in a block is no greater than 10, the default algorithm achieves the optimal block in most cases (the closer the brown bars are to the blue bars, the better the default algorithm performs in blocks with the corresponding maximum group size range). However, its performance declines as the maximum group size increases.

When the maximum group size is larger and the group does not fall into the special case where complexity can be simplified, we must select only a subset of transactions from the group to keep permutation computationally feasible.
However, many of the excluded transactions could in fact be included. As a result, other greedy algorithms, although not guaranteed to find the optimal ordering, can include more transactions, which allows them to build blocks with higher value.

In particular, among all blocks with block value greater than 1 ETH, the default algorithm attains the optimal result in only about 5\% of the cases. This indicates that when MEV is high, the algorithm's permutation strategy, designed to ensure computational feasibility, sacrifices certain bundles and thus yields less favorable outcomes.

\parhead{Execution time evaluation.} During the evaluation, we also record the execution time of each block-building algorithm, and the result is shown in~\autoref{fig:time-cdf}.
The default algorithm can naturally leverage multi-threading to accelerate execution, because it permutes candidate sequences during runtime. Therefore, we additionally evaluate the default algorithm with different thread counts (25 and 50).
From the figure, we can observe that the two greedy algorithms (\texttt{mgp-ordering} and \texttt{mp-ordering}) achieve the fastest execution, followed by the parallel algorithm. For the default algorithm, exploiting multi-threading leads to clear runtime reductions. Increasing the thread count from 25 to 50 yields further improvement, bringing the runtime closer to that of the parallel algorithm across a substantial portion of cases.

\begin{figure}
    \centering
    \includegraphics[width=\linewidth]{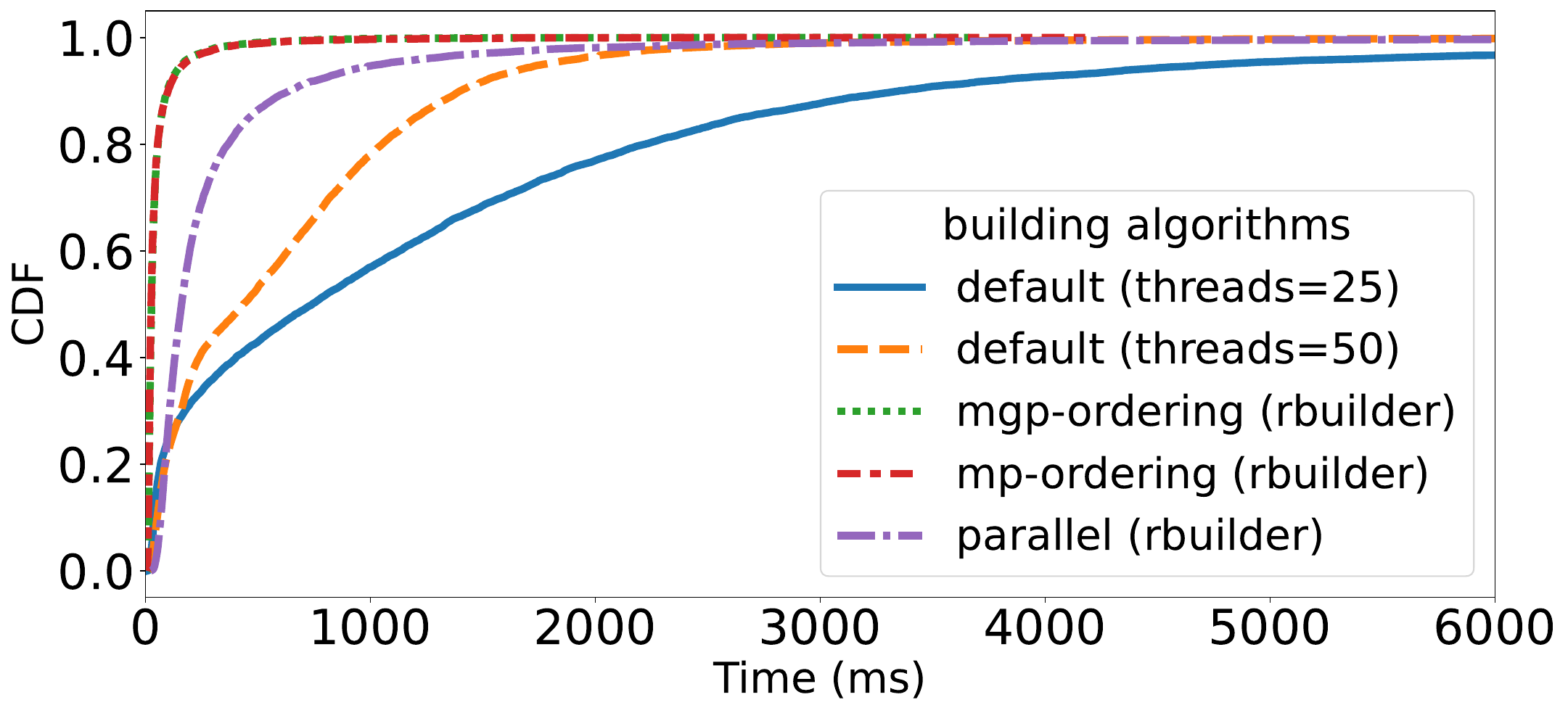}
    \caption{CDF of runtime for each block-building algorithm.}
    \label{fig:time-cdf}
\end{figure}

\parhead{Ideas for improvement.}
The evaluation shows that although the default algorithm remains close to optimal in most non-optimal cases, there is still room for improvement. As shown in~\autoref{fig:group-analysis}, the default algorithm performs well when the maximum group size in a block is small, but its effectiveness drops sharply as the group size increases. In such cases, simply enumerating more possibilities is insufficient; instead, it is necessary to identify additional scenarios where the permutation complexity can be simplified (L\ref{line:is-feasible}–\ref{line:algfor}), thereby further improving the performance.
For example, if a more detailed study enables us to develop methods for resolving groups with sizes between 11 and 20, the winning rate of the default algorithm might increase from 53\% to 58\%.

The execution time evaluation suggests that improving how compute resources are coordinated could further reduce the runtime of the default algorithm and is therefore a meaningful direction for future execution time optimization. %

%% file: sections/6-discussion.tex
\section{Discussion}
\done%

\parhead{Rethinking Block-Building Mechanism Design}
This paper highlights a central lesson in block-building mechanism design: while eliminating incentives toward integration and ensuring incentive compatibility are conceptually clean and desirable goals, they come into inherent tension with the practical objective of economically efficient block construction.

Looking back, the VCG-based mechanism $\texttt{VCG}_\mathcal{A}$ presented in~\autoref{subsec:practical}, where the block is constructed solely by a default algorithm without external builder participation, can serve as a ``perfect'' mechanism in the following sense. First, it inherently eliminates the integration issue, as there are no external builders with whom searchers can integrate. 
Second, as shown in~\autoref{theorem:variantDSIC}, truthful bidding is a dominant strategy for searchers under this mechanism. 

The primary concern of such a mechanism lies in the practical (economic) efficiency. The default algorithm may fail to generate a block that efficiently utilizes the block space and maximizes the total realized value (i.e., social welfare). To overcome this, there are two possible directions.

The first direction is to design a better default algorithm. However, the block building task fundamentally involves ordering and interactions among transactions, and thus appears to subsume difficult combinatorial optimization problems. This suggests that even achieving theoretically good approximation guarantees is challenging (under standard complexity assumptions such as $\mathsf{P} \neq \mathsf{NP}$). On the other hand, one might hope to design a default algorithm with strong approximation guarantees that is ``good enough'' \emph{in practice}, as what we manage to push forward in~\autoref{sec:implementation}. In particular, we believe it is worthwhile to further explore the design of practically effective approximate algorithms based on empirical insights. %

The second direction is to allow builders to participate, as we do in this paper, with the hope of leveraging their sophistication to generate higher-value blocks. 
However, this reintroduces the searcher-builder integration problem and complicates the incentive landscape. 
Even if setting aside the integration concerns and assuming that builder-submitted algorithms also satisfy the two conditions in~\autoref{property} (just as the default algorithm does), 
we conjecture that it is impossible to design a mechanism that simultaneously satisfies \textit{Efficiency} (always selecting the outcome that maximizes social welfare), \textit{Searcher-DSIC}, \textit{Builder-DSIC}, \textit{Individual Rationality} (IR), and \textit{Budget Balance}.
\done%

To understand why, suppose that Builder-DSIC is already achieved (which we will return to shortly). Then, to demand Efficiency and Searcher-DSIC (along with searcher IR), the mechanism must select the block with the highest bid $\beta^*$ and effectively charge each searcher the VCG payment. Equivalently, this means that the winning builder first charges each transaction $i$ its claimed bid $b_i$, and the mechanism subsequently refunds $i$ its marginal contribution to the selected block, defined as $r_i = \beta^* - H_i$, where $H_i = \max_{k \in [0:n]} \set{\text{Welfare (Bid) of } \mathcal{A}_k \text{ without } i}$ denotes the best alternative outcome without transaction $i$. %

Conversely, to incentivize builders to truthfully report their block values (i.e., Builder-DSIC), the mechanism must effectively charge the winning builder the second-highest bid. That is, the winner pays their claimed bid and then is refunded the surplus between their bid and the second-highest one. 

Therefore, the remaining question is whether these two components can coexist in a single mechanism. Unfortunately, in some scenarios, the total refunds to searchers may exceed the revenue collected from the builder, causing the mechanism to subsidize participants and thereby violating the Budget Balance property. Below is a concrete counterexample. 

\begin{example}

Consider two transactions: $\text{tx}_1$ with bid $b_1 = 100$ and $\text{tx}_2$ with bid $b_2 = 1$. There are two algorithms: 
\begin{itemize}
    \item \textit{Algorithm~1:} Always includes the transaction with the smallest hash, assumed to be $\text{tx}_1$.
    \item \textit{Algorithm~2:} Always includes the transaction with the largest hash, assumed to be $\text{tx}_2$.
\end{itemize}
In this case, Algorithm~1 wins with bid $\beta^* = 100$, producing outcome $o^* = (\text{tx}_1)$, while Algorithm~2 is the runner-up with $\beta' = 1$. To satisfy Builder-DSIC, the mechanism effectively charges the winner the second-highest bid, which is $1$. To satisfy Searcher-DSIC, the mechanism computes refunds as follows. Without $\text{tx}_1$, Algorithm~1 yields bid $0$, making Algorithm~2 (bid $1$) the best alternative ($H_1 = 1$), thus the refund to $\text{tx}_1$ is $r_1 = 100 - 1 = 99$. Similarly, $r_2 = 0$. As a result, the mechanism collects $1$ from the builder but pays out $99$ to searchers, resulting in a net deficit.

\end{example}

\parhead{Sybil attacks}
Sybil attacks, where a builder or searcher posts as multiple actors, are a significant challenge in MEV-related mechanisms~\cite{mazorra2023cost,mazorra2023optimality,pan2024sybil,todo2013false}.
In our previous analysis, the strategy space of builders and searchers did not account for the possibility of Sybil attacks. We now briefly discuss.

First, when builders create multiple identities to participate in the block-building algorithm competition, they cannot increase their overall profit. This is because \mechanism allocates to the winning builder only the difference between the value of its block and that of the second-best block. Creating additional identities cannot lower the value of the second-best block, and therefore, cannot increase builders' profit. 

Sybil attacks by searchers are more detrimental. 
Recall that the mechanism computes refunds based on each bundle's marginal contribution. 
A searcher can split a single bundle into multiple mutually compatible Sybil bundles to inflate their total marginal contributions, thereby extracting a higher refund compared to the case without Sybil. 
The existing line of work on mitigating Sybil attacks mainly includes the introduction of decentralized identities for searchers~\cite{maram2021candid} and the application of sybil-proof mechanism design~\cite{pan2024sybil}.
We leave the exploration of such mitigation techniques against searcher-driven Sybil attacks to future work.

\parhead{Proposer adoption} 
While our mechanism is designed to consider the incentives of searchers and builders, a natural question is whether proposers themselves would adopt it. 

We briefly explore this issue in~\autoref{sec:adoption}. At a high level, we model the interaction between a proposer and searchers as a game in which the proposer chooses whether to commit to \name or to build blocks directly and pick the most profitable outcome. We find that in two natural extremes---when bundles are either fully independent or fully conflicting---the equilibrium is that the proposer commits to adopting \name.

These results suggest that, at least in certain scenarios, proposer adoption can be self-enforcing. 
Extending this analysis to more general settings remains an intriguing direction for future work. The formal model and analysis appear in~\autoref{sec:adoption}.

%% file: sections/7-related-works.tex
\section{Related Work}

\parhead{Causes and implications of builder centralization}
The centralization of the builder market is a well-known issue for the Ethereum community~\cite{yang2022sok,heimbach2023ethereum,wahrstatter2023time,yang2024decentralization,oz2024wins}.
Early studies~\cite{yang2022sok,heimbach2023ethereum,wahrstatter2023time} show that the top three builders---Flashbots, Builder0x69, and Beaverbuild---produced over 60\% of blocks as of May 2023.
Later works attribute the builder centralization to exclusive private order flows (``integration''), which advantaged integrated builders but reduced proposer profits, causing significant proposer losses in 2024 and threatening the stability of PBS~\cite{gupta2023centralizing,yang2024decentralization,oz2024wins}.

In addition to empirical work, several studies have investigated builder centralization through theoretical analysis. For example, \cite{gupta2023centralizing,capponi2024proposer} show that builders with exclusive order flows are more likely to win MEV-Boost auctions and attract more private transactions from non-integrated searchers. Moreover, \cite{bahrani2024centralization} finds that a decentralized and competitive builder market supports validator decentralization, whereas heterogeneity in block-building capabilities may instead drive validator centralization.
Insights from empirical and theoretical studies motivate our work: eliminating exclusive order flows and ensuring open access for all builders, while also incentivizing algorithmic innovation.

\parhead{Block-building in TEEs} 
On Ethereum, the most widely deployed practice of TEE-based block building is BuilderNet, which accounts for about 40\% of the market share at the time of writing~\cite{relayscan}.
At the current stage, although BuilderNet makes efforts to distribute block building, it still lacks a formal mechanism to ensure incentives and support block-building algorithm innovation.
Our work can be further incorporated into it to encourage truthfulness among searchers and builders, as well as the innovation of block-building algorithms.

Besides Ethereum, TEEs are widely used for block building in both Layer 1 and Layer 2. Solana's Blockspace Assembly Markets (BAM) also employ TEEs for transaction filtering and ordering~\cite{bam2025introducing}.
Flashbots proposed Flashblocks, a design adopted by Unichain and Base at the time of writing~\cite{BaseFlashblocksApps2025,flashbots2024rollupboost}. In the Flashblocks architecture, the mempool and block builder operate within TEEs, ensuring the confidentiality and integrity of user transactions and allowing TEE-attested blocks to be built at predefined intervals.
According to their roadmap~\cite{flashbots2024rollupboost}, the current design will be extended to support multi-operator block building.
Similar to BuilderNet, our work can also be integrated into BAM and Flashblocks to further enhance competition among block-building algorithms.

\parhead{Theoretical analysis about MEV}
Mamageishvili et al. investigate the allocation of MEV between validators and searchers in competitive block-building markets such as Ethereum, using the concept of the core from cooperative game theory~\cite{mamageishvili2024searcher}. They show that any allocation giving each searcher at most their marginal contribution lies within the core, while allocating exactly the marginal contribution is dominant-strategy incentive compatible. Our analysis includes builders and examines how MEV should be allocated among searchers, builders, and validators.

In addition to the allocation of MEV along the MEV supply chain, previous works have also analyzed the game-theoretic properties of MEV in CFMMs~\cite{kulkarni2022towards,zhang2024rediswap,wadhwa2024data}, batch auctions~\cite{zhang2025maximal}, sequencing rules~\cite{xavier2023credible,li2023mev}, and arbitrages~\cite{fritsch2024mev,oz2025cross}.

%% file: sections/8-conclusions.tex
\section{Conclusions}

In this paper, we propose \name, a framework for equitable and incentive-compatible block building that decouples transaction collection from ordering to ensure equal access to order flows and foster builder innovation.
At its core, the mechanism \mechanism leveraging a default algorithm guarantees DSIC for searchers when the default algorithm dominates, while ensuring always DSIC for builders and rewarding them when their algorithms win. This design incentivizes truthful participation and innovation in block-building algorithms.
We further implement a default algorithm guided by empirical conflict analysis, which achieves the highest block value in \betterratio of cases, demonstrating practicality and effectiveness.

%% file: sections/appendix.tex
\section{Proof of~\autoref{ob:alternative}}\label{sec:obproof}
We restate the~\autoref{ob:alternative} below.

\obalternative*

\begin{proof}
    Fix the bundle set $M$, the bid profiles $\mathbf{b}$, the default algorithm, a colluding pair consisting of a builder $j$ and searcher $i$, and all algorithms from other builders $k\neq j$. %
    Under the instance $M$, the default algorithm produces $o^*$ with the value $\beta_0$ higher than all builder bids, including $j$’s. As a result, builder $j$ loses and gets zero utility, while searcher $i$'s utility is 
    $$u_i = v_i(o^*) - b_i(o^*) + \beta_0 - \sum_{k\neq i} b_k(o_{-i}),$$
    where $v_i(o^*)$ is $i$'s true valuation under the block $o^*$, $b_i(o^*)$ is $i$'s payment, and $\beta_0 - \sum_{k\neq i} b_k(o_{-i})$ is the refund to $i$. 
    
    Now, let builder $j$ adopt the following strategy: submit an algorithm that copies the default algorithm's codes to generate the same block, bids $\beta_0+\epsilon$, and reports counterfactual bids $\beta_{-i} = \epsilon$ for the colluding bundle $i$ and $\beta_{-k} = \beta_0+\epsilon$ for all other bundles $k \neq i$. 
    
    By bidding $\epsilon$ above the default algorithm, the strategic builder $j$ instead wins. Thus, the mechanism enters Case (4b) with the alternative refund rule, which requires $j$'s algorithm to also return a vector of counterfactual bids $\{\beta_{-i} \mid i\in M\}$. 
    In this case, the winning block is still $o^*$, as the builder $j$ simply copies the default algorithm's block. The builder $j$'s utility remains $0$ (by getting $\beta_0$ from searchers, paying $\beta_0+\epsilon$ to the TEE, and receiving $\epsilon$ as refund per the second step of Case 4b). However, the colluding searcher $i$'s utility increases to 
\[
\begin{aligned}
u_i' & \triangleq v_i(o^*) - b_i(o^*) + (\beta^*-\beta_{-i}) \\
   &= v_i(o^*) - b_i(o^*) + (\beta_0 + \epsilon - \epsilon) \\
   &= v_i(o^*) - b_i(o^*) + \beta_0 > u_i \,,
\end{aligned}
\] gaining the maximum refund $\beta_0$. This concludes the proof.
\end{proof}

\section{Proposer Adoption}\label{sec:adoption}
In this section, we briefly discuss the incentive of proposers, namely, whether they would like to adopt our design, which is, though, not the focus of this paper. %

\parhead{Adoption game}
We partially answer the question by considering the following setting. The default algorithm is the only one in \name, so the strategic players only include the proposer and the searchers. 
We model their interaction as a two-step Stackelberg game, where the proposer, as the leader (namely, moves first), chooses between two strategies:
\begin{enumerate}
  \item[(a)] \emph{Commit:} accept only the outcome produced by our mechanism (the block and payment to the proposer, where refunds to searchers have already been deducted);
  \item[(b)] \emph{Build-and-choose:} build blocks themselves by including all received bundles (with no refunds); and after seeing the outcome of \name, choose whichever block pays the proposer more.
\end{enumerate}

The searchers simultaneously respond as the follower: each can choose to submit their bundle either exclusively to \name or exclusively to the proposer.

The question is: under equilibrium, what strategies will the proposer and searchers adopt? We show that in the following two scenarios, the proposer will choose \textit{Commit} and the searchers will exclusively send their bundles to \name. 
\begin{itemize}
  \item \textit{No conflict}: all bundles are independent, i.e., the execution of one bundle does not affect the bid of another;
  \item \textit{Full conflict}: all bundles are mutually exclusive (e.g., they are targeting the same MEV opportunity) and at most one can be effectively included. %
\end{itemize}

This demonstrates that in certain natural scenarios, the proposer is indeed incentivized to adopt our design. How to incentivize the proposer to adopt a new design in a more general case is an intriguing direction for future work.

\begin{proposition}[Self-enforcing adoption in extreme conflict structures]\label{prop:adoption}
    Consider the strategy profile in which the proposer commits to adopt \name, and every searcher sends their bundle exclusively to \name. If either (i) there is no conflict among bundles, or (ii) there is full conflict among all bundles, then this profile forms a Stackelberg equilibrium.
\end{proposition}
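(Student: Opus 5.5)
We argue by backward induction on the Stackelberg game. First we fix the proposer's choice \emph{Commit} and show that no searcher can strictly gain by sending its bundle exclusively to the proposer instead of to \name. Then, taking the searchers' best responses as given, we show that the proposer does not strictly gain by switching to \emph{Build-and-choose}.

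\textbf{Follower step.} Under \emph{Commit}, the proposer accepts only the \name outcome. A bundle routed exclusively to the proposer is therefore never included, and its searcher gets utility $0$. If the searcher instead sends the bundle to \name, then by \autoref{cor:searchersIR} it obtains non-negative utility $v_i(o^*)-b_i(o^*)+r_i\ge 0$. So sending to \name is a weak best response for every searcher, regardless of what the other searchers do.

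\textbf{Leader step.} We fix the searchers' responses to each proposer strategy and compare the proposer's revenue under the two options. The two conflict structures behave very differently, so we treat them separately.

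\emph{(i) No conflict.} Every bundle is conflict-free, so $S=M$. The mechanism refunds every bid, and under \emph{Commit} the proposer's revenue is $\beta_0-\sum r_i=0$. Under \emph{Build-and-choose}, the proposer charges full bids with no refunds. A searcher who sends to the proposer pays $b_i>0$ and gets $v_i-b_i$. A searcher who sends to \name gets $v_i$ with zero net payment: the \name block is built only from bundles sent to it, every bundle is conflict-free, and each such bundle is refunded its bid. Every searcher therefore best-responds by sending to \name, in which case the proposer's own block is empty. Whichever block the proposer chooses then pays it $0$. Both options give the proposer $0$, so \emph{Commit} is a weak best response.

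\emph{(ii) Full conflict.} At most one bundle can be included, and $\Omega_\mathcal{A}$ contains all singletons. Under \emph{Commit}, \name reduces to a second-price auction. The highest bidder $1$ wins, receives refund $b_1-b_2$, and the proposer earns $b_2$, the second-highest value under truthful bids.

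Under \emph{Build-and-choose}, we exhibit a searcher best response that leaves the proposer with at most $b_2$. Searcher $1$ sends to \name. If it is alone there, it wins with refund $b_1-0$, so it pays nothing. The proposer's block, built from the remaining bundles, can yield at most the second-highest value $v_2$. Since the proposer picks whichever block pays more, it earns at most $v_2$, which is exactly its \emph{Commit} revenue. If instead several searchers send to \name, the \name block pays at most the second price. Hence the proposer's revenue never exceeds $b_2$, and \emph{Commit} is weakly optimal.

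\textbf{Main obstacle.} The delicate part is the follower response to \emph{Build-and-choose} in case (ii). Searcher payoffs depend jointly on the proposer's ex-post choice and on how the other searchers split their bundles, so there may be multiple equilibria. We would handle this by adopting the standard Stackelberg convention that ties are broken in favour of the prescribed profile. Under that convention it suffices to exhibit one consistent best-response profile in which the proposer's \emph{Build-and-choose} revenue is at most $b_2$, rather than characterizing every follower equilibrium.
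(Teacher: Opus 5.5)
Your follower step and case (i) match the paper's proof. The gap is in case (ii), in the searchers' response to \emph{Build-and-choose}. The profile you exhibit is not a best-response profile, and your revenue bound only covers profiles in which bundle $1$ sits in \name.

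If searcher $1$ is alone in \name, the \name block pays the proposer $\beta_0-r_1=b_1-b_1=0$. Any positive bid routed to the proposer then makes the proposer select its own block. Bundle $1$ is excluded rather than ``winning'', and searcher $1$ strictly gains by routing to the proposer and outbidding the others.

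The prescribed profile fails as well. If everyone uses \name, searcher $1$ earns $v_1-v_2$. By switching to the proposer with bid $v_3+\epsilon$, it beats the reduced \name payment $v_3$ and earns $v_1-v_3-\epsilon$, which is strictly more when $v_2>v_3$. Checking the remaining configurations the same way shows that when $v_1>v_2>v_3$ and nobody bids above value, no equilibrium of the \emph{Build-and-choose} subgame has bundle $1$ in \name. So your bound cannot be attached to any follower equilibrium. The tie-breaking convention does not rescue it, since it only selects among genuine equilibria. (It is also the leader-pessimistic selection, not the usual strong-Stackelberg convention.)

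The missing step is to treat an arbitrary split in which a set $S$ of bundles goes to the proposer and $M\setminus S$ goes to \name, in particular with the top bundle in $S$. The truthful bundles in \name then pay the proposer $v^{M\setminus S}_{2nd}$. This acts as a reserve price in a first-price auction among the bundles in $S$. Take the equilibrium of that auction in which the top bundle of $S$ bids $\max\{v^{S}_{2nd},v^{M\setminus S}_{2nd}\}$ and the others bid truthfully; there the proposer earns at most that amount. The second-highest value of a union is at least that of either part, so this is at most $v^{M}_{2nd}$, the \emph{Commit} revenue. This is exactly how the paper closes case (ii).
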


\begin{proof}

    As the default block is the only one in \name, a searcher who sends their bundle to \name should always bid truthfully by~\autoref{theorem:searcherDSIC}. Then, whenever the proposer commits to only accept the outcome of our mechanism, a searcher's dominant strategy is to exclusively send their bundle to \name, and obtain non-negative utility according to~\autoref{cor:searchersIR} (otherwise, if the searcher exclusively sends their bundle to the proposer, it is impossible for their bundle to be included in a block, yielding zero utility).

    Next, we will analyze the two cases one by one, and show that in either case, the proposer who moves first has no incentive to deviate from committing to adopt \name. 
    
    \parhead{Case 1: no conflict} 
    Recall that the proposer has two strategies. 
    If the proposer chooses \textit{Commit}, then as discussed above, all searchers will exclusively send their bundles to \name. By ~\autoref{alg:default}, all bundles will be included in the default block, and the bid of each bundle will be fully refunded according to the refund rule in step (2) of \mechanism. As a result, the proposer gains zero utility.  
    If the proposer chooses \textit{Build-and-choose}, then each searcher's best response is still to exclusively send their bundles to \name, yielding the same zero utility for the proposer. Thus, the proposer has no incentive to deviate from choosing \textit{Commit}.

    \parhead{Case 2: full conflict}
    If the proposer chooses \textit{Commit}, then all searchers will exclusively send their bundles to \name. Let $M$ denote the set of all bundles of searchers. Under the structure of full conflict, at most one of them could be effectively included. Our default algorithm~\autoref{alg:default} will select the bundle $i$ with the highest bid $b_i = \max_{k\in M} \{b_k\}$. Recall that all searchers bid truthfully as discussed above. Then, \mechanism eventually charges $v_i$ from the searcher $i$, among which the second-highest bid/value $v_{2nd}^M = \max \{ v_k \mid k\in M, k\neq i \}$ goes to the proposer. Thus, the proposer's utility equals the \textit{overall} second-highest value $v_{2nd}^M$. 
    
    On the other hand, if the proposer chooses \textit{Build-and-choose}, suppose under equilibrium, a subset of bundles $S \subseteq M$ is exclusively sent to the proposer, while others, $M\setminus S$, go to \name. We will show that for any possible $S$, the proposer's utility will be at most $v_{2nd}$. For each bundle in $M\setminus S$, they should remain bidding truthfully. Let $v_{2nd}^{M\setminus S}$ be the second-highest true valuation of bundles in $M\setminus S$. Then the searchers in $S$ essentially face a first-price auction with reserve $v_{2nd}^{M\setminus S}$. Letting $v_{2nd}^S$ be the second-highest true valuation of bundles in $S$, an equilibrium among $S$ is that the searcher with highest valuation bids $\max\left\{v_{2nd}^{S},v_{2nd}^{M\setminus S}\right\}$ and others bid truthfully. Thus, overall, the proposer's utility is $\max\left\{v_{2nd}^{S},v_{2nd}^{M\setminus S}\right\}\leq v_{2nd}^M$.

    This shows that the maximal utility the proposer can obtain is at most $v_{2nd}^M$, and they can obtain such an amount of money by choosing \textit{Commit}, after which each searcher's dominant strategy is joining \name.

    This concludes the proof.
\end{proof}

\section{Empirical Study of Banana Gun Transaction Conflicts}

We conduct an empirical study to understand the conflicts between Banana Gun transactions and other transactions within a block. Specifically, we randomly select 10,000 on-chain blocks and group transactions by conflict. For each conflict group containing Banana Gun transactions, we check whether it also includes any non-Banana Gun transactions.

In about 37\% of these blocks, all conflict groups that include Banana~Gun transactions contain no other transactions, which aligns with our definition of ``conflict-free.'' In particular, one example is block~21000813, which has a block value of 55.5~ETH, where the conflict-free Banana~Gun transactions contribute the majority of the block value.
This observation further justifies our assumption that Banana~Gun bundles can be treated as conflict-free in a significant portion of blocks.